\newcommand{\li}{n}  
\newcommand{\lo}{m}			
\newcommand{\mm}{\mathcal{A}}    
\newcommand{\ivt}{\mathbf{X}}   
\newcommand{\ovt}{\mathbf{Y}}	
\newcommand{\rvt}{\hat{\ivt}}  
\newcommand{\ep}{P_{\textnormal{e}}^{\textnormal{blk}}}     
\newcommand{\er}{E^{blk}}    
\newcommand{\ent}{H}   
\newcommand{\mif}{I}   
\newcommand{\real}{\mathbb{R}}  
\newcommand{\precision}{Q}  
\newcommand{\nr}{\ell_{q}}
\newcommand{\up}{U}
\newtheorem{theorem}{Theorem}
\newtheorem{definition}{Definition}
\newtheorem{lemma}{Lemma}
\newtheorem{corollary}{Corollary}
\newcommand{\cmt}[1]{}
\newcommand{\xRightarrow}[2][]{\ext@arrow 0359\Rightarrowfill@{#1}{#2}}
\newcommand{\xLeftrightarrow}[2][]{\ext@arrow 0359\Leftrightarrowfill@{#1}{#2}}
\newcommand\myeqa{\mathrel{\overset{\makebox[0pt]{\mbox{\normalfont\tiny\sffamily (a)}}}{\geq}}}
\newcommand\myeqb{\mathrel{\overset{\makebox[0pt]{\mbox{\normalfont\tiny\sffamily (b)}}}{\geq}}}
\begin{document}

\title{Energy-efficient Decoders for Compressive Sensing: Fundamental Limits and Implementations}

\author{Tongxin Li${}^1$\qquad\qquad Mayank Bakshi${}^1$\qquad\qquad Pulkit Grover${}^2$\vspace{0.5em}\\ 
\small ${}^1$The Chinese University of Hong Kong\qquad${}^2$Carnegie Mellon University}\maketitle

\IEEEpeerreviewmaketitle

\begin{abstract}
The fundamental problem considered in this paper is ``What is the \textit{energy} consumed for the implementation of a \emph{compressive sensing} decoding algorithm on a circuit?". Using the ``{\em information-friction}'' framework introduced in~\cite{grover2013information}, we examine the smallest amount of \textit{bit-meters}\footnote{The \lq\lq{}\textit{bit-meters}\rq\rq{} metric was first proposed in \cite{grover2013information}, as an alternative to the VLSI model introduced by Thompson and others in~\cite{thompson1979area,thompson1980complexity, brent1981area, chazelle1981towards, leiserson1981area, mead1980introduction} (and explored further in~\cite{sinha1988new, kramer1984vlsi, bhatt2008area, cole1988optimal, thompson1981vlsi}) for measuring the energy consumed in a circuit.} as a measure for the energy consumed by a circuit. We derive a fundamental lower bound for the implementation of compressive sensing decoding algorithms on a circuit. In the setting where the number of measurements scales linearly with the sparsity and the sparsity is sub-linear with the length of the signal, we show that the \textit{bit-meters} consumption for these algorithms is order-tight, {\em i.e.}, it matches the lower bound asymptotically up to a constant factor. Our implementations yield interesting insights into design of energy-efficient circuits that are not captured by the notion of computational efficiency alone. 
\end{abstract} %

{\quad  \bf Keywords:} {enegy-efficiency, compressive sensing, circuit implementation}
\section{Introduction}
\lettrine[lines=1]{C}{ompressive Sensing} has emerged as an attractive paradigm in recent years~\cite{candes2006robust,donoho2006compressed}. Motivated by applications where processing the dataset without exploiting the underlying sparsity is prohibitively expensive, compressive sensing aims to reduce the cost of processing through algorithms that take sparsity into account. Initial work on compressive sensing showed that the number of measurements required to sketch a signal of length $n$ and sparsity $k$ is  ${\cal O}(k\log n)$\cite{candes2006robust,donoho2006compressed}. Subsequently, computationally efficient algorithms for this problem have also been discovered~\cite{Can:08,BarDDW:08,BerIR:08,BerI:09,GilI:10,bakshi2012sho,PawR:12}. The fastest of these algorithms uses a peeling type decoder and have running time ${\cal O}(k)$ with ${\cal O}(k)$ measurements~\cite{bakshi2012sho,PawR:12}. 

In this paper, we adopt an energy-centric view of compressive sensing. Our motivation comes from applications such as ad-hoc wireless networks~\cite{Guo:10}, where decoding energy is of critical importance. In these applications, since the decoder of often a batter-powered device, processing the received measurements to obtain the desired reconstruction is a fundamentally limiting aspect of the system design. Notably, the computationally efficient algorithms of~\cite{bakshi2012sho,PawR:12} are no longer order-optimal when the decoding energy is the metric of interest. Therefore, we ask the question \emph{``What is the smallest amount of energy required to decode a signal from its compressed measurements?''}

As an exploratory work, we examine the problem in the \emph{information-friction} framework. This framework was introduced in~\cite{grover2013information} for finding a trade-off between the energy consumed in encoding/decoding processes and transmission power in a communication system. In practice, it's reasonable for us to relate {\it{bit-meters}} with energy consumed for the decoding process, as~\cite{grover2013information} elaborated through multiple different scenarios. We first show that for a fixed precision $\precision$, the required \textit{bit-meters} (energy) for decoding a compressed signal can be no smaller than $\Omega\left( \sqrt{\frac{\li k}{\log \li}}\sqrt{\log{\frac{1}{\ep}}}\right)$ asymptotically at the regime $m = \Theta\left(k\right)$ and $k= n^{1-\beta}$ where constant $\beta\in (0,1)$. We show that this asymptotic lower bound is order-tight by giving two multi-stage algorithms for each the \textit{bit-meters} is $\mathcal{O}\left(\sqrt{\frac{\li k}{\log{\li}}}\sqrt{\log{\frac{1}{\ep}}}\right)$.

The rest of the paper is organized as follows. We begin with describing our model in Section~\ref{sec:background}. The main results of the paper stated in Section~\ref{Main Results}. The key ideas in the proofs of these results are outlined in Section~\ref{sec:key}. Finally, the main parts of the proofs are described in Appendices~\ref{appendix:lb}-\ref{appendix:up}. 

\section{Background and Definitions}
\label{sec:background}
In this section, we formalize the models used in this paper.


\subsection{\em Compressive Sensing}
%

For compressive sensing, the input vector $\ivt\in\real^{\li}$ is a real-valued vector of length $\li$. The linear encoding process is represented by an \textit{encoding matrix} $\mm\in \mathbb{C}^{m\times n}$. The vector $\ovt=\mm\ivt\in\mathbb{C}^{\lo}$ of length $\lo$ is the corresponding \textit{output vector}. Based on $\ovt$, a \textit{recovery vector} $\rvt\in\real^{\li}$ is decoded using a decoding algorithm. For sparsity, we consider two basic models----probabilistic and combinatorial. We assume the length of \textit{input vector} is considerably large. Therefore, the two models are asymptotically equivalent\footnote{Note that by the strong law of large numbers (see, for instance the excellent textbook~\cite{grimmett1992probability}), or even a weaker statement by Chernoff bound in~\cite{chernoff1952measure}, the number of non-zero entries of the input vector $\ivt$ in the probabilistic sparsity sensing model will be bounded in a constant range containing $k$ (\textit{e.g.} $[k/2,3k/2]$) with an exponential probability of error $e^{-\Theta{k}}$, which is actually, negligible compared with the error probability we could achieve using the algorithms in Section~\ref{Upper Bound} with the upper bound on \textit{bit-meters} (see Definition~\ref{def:bitmeter}) provided in Theorem~\ref{thm:bitmeter} at Section~\ref{Main Results}. }. We 
adopt the probabilistic one with independent property to simplify calculations for the ease of analysis. For the purpose of presentation, the combinatorial model is used to give us a concise insight into both the lower and the upper bounds.
\begin{definition}[Sparsity Model ($\li$,$m$,$p$)]
\label{pcs}
A bounded length-$n$ ``compressible" vector $\ivt\in\real^{\li}$ is an {\textnormal{Input Vector}} whose each entry $X_i\in\real$ satisfies $|X_i|\leq\up$ (with a constant upper bound $\up\geq 0$) and has probability\footnote{We assume $p=o(1)$ as the {\it{sparse assumption}} for the {\it{Sparsity Model ($\li$,$m$,$p$)}}.} $p$ to be non-zero.
\end{definition}

\begin{definition}[Sparsity Model ($\li$,$m$,$k$)]
\label{def:csm}
A bounded length-$n$ ``compressible" vector $\ivt\in\real^{\li}$ is the $k$-sparse\footnote{We assume $k=o(n)$ as the {\it{sparse assumption}} for the {\it{Sparsity Model ($\li$,$m$,$k$)}}.} {\textnormal{Input Vector}} if it contains exactly $k$ non-zero entries $X_i\in\real$ satisfying $|X_i|\leq\up$ (with a constant upper bound $\up\geq 0$).
\end{definition}


In this paper, we focus on the asymptotic regime where $k=k(n)$ and $p=p(n)$ such that $k=\omega(1)\cap{o}(\li)$ and $p=\omega(1/\li)\cap{o}(1)$, \textit{i.e.}, both $k$ and $\li p$ grow sub-linearly with $\li$. Our theorem for upper bounds in section~\ref{Main Results} is restricted in the sub-linear regime $k= n^{1-\beta}$ where $\beta\in (0,1)$.

We define the \textit{average block error probability} based on quantization and a given \textit{norm} $|| \cdot||_{\nr}$ of interest for $0\leq q\leq\infty$.

\begin{definition}[Reconstruction Error, Precision, Average Block Error Probability]
Given the {\textnormal{Input Vector}} $\ivt$ and {\textnormal{Recovery Vector}} $\rvt$, the {\textnormal{Reconstruction Error}} is defined as $||\ivt-\rvt||_{\nr}$ and the {\textnormal{Relative Error}} is defined further as $||\ivt-\rvt||_{\nr}/||\ivt||_{\nr}$. Let $\precision$ denote the {\textnormal{Precision}}, i.e., the required number of bits for reconstructing the input vector $\ivt$, then the {\textnormal{Average Block Error Probability}} is defined by $\ep=\Pr(\er=1)$ where $\er=1$ if the relative error satisfies $||\ivt-\rvt||_{\nr}/||\ivt||_{\nr}> 2^{-\precision}$; Otherwise, $\er=0$.
\end{definition}

\subsection{\em Implementation Model}

A decoding circuit has two functionalities----storing the \textit{output vector} and processing it to obtain \textit{recovery vector}. 

We think of a decoding circuit as a "graph" whose nodes from a subset of \textit{points} of a two dimensional lattice $\Lambda\subset\real^{2}$. Each node can both store one real number and perform a computation. Nodes are connected through \textit{undirectional links} that represents the wiring in the circuit. Furthermore, each node on the lattice $\Lambda$ (\textit{i.e.}the circuit) has a constant \textit{packing radius} $\rho(\Lambda)>0$ to ensure sufficient distance between the nodes for real implementation\footnote{While we describe a more general framework here, for most of our results, it suffices to restrict our attention to square lattices.}. Considering the coherence with~\cite{grover2013information}, we define the generalized circuit model succinctly using the same order of definitions as~\cite{grover2013information} did. 


\begin{definition}[$\rho$--Lattice ]
A lattice $\Lambda$ is a {\textnormal{$\rho$--Lattice}} if it spans $\mathbb{R}^{2}$ and the {\textnormal{packing radius}} $\rho(\Lambda)>0$ is at least $\rho$.
\end{definition}

\begin{definition}[Substrate]
A {\textnormal{Substrate}} $V$ is a compact subset of $\real^{2}$.
\end{definition}

\begin{definition}[Grid ($\Lambda,V$)]
A {\textnormal{Grid ($\Lambda,V$)}} is a sub-lattice $\Lambda_{V}$ defined as the intersection of a {\textnormal{Lattice $\Lambda$}} and a {\textnormal{Substrate}} $V$ such that $\Lambda_{V}=\Lambda\cap V$.

\end{definition}

\begin{definition}[Decoding Circuit, Sub-Circuit, Computational Nodes, Input-nodes, Output-nodes]
The {\textnormal{Substrate}} $V$ together with a collection $S\subset\Lambda_{V}$ of points (called {\textnormal{Computational Nodes}}, or simply {\textnormal{Nodes}}) inside the {\textnormal{Grid $\Lambda_{V}$}}, is called a {\textnormal{Decoding Circuit}}, and is denoted by $DecCkt=(V,S)$. A {\textnormal{Sub-Circuit}} denoted by $SubCkt=(V_{\textnormal{ sub}},S_{\textnormal{ sub}})$ is a bounded subset $V_{\textnormal{ sub}}$ of $V$ together with a subset of {\textnormal{Computational Nodes}} $S_{\textnormal{ sub}}\subset \Lambda\cap V_{\textnormal{ sub}}$. The {\textnormal{Output-nodes}} are defined as the nodes for storing the Recovery vector $\rvt$, and the {\textnormal{Input-nodes}} are defined as the nodes for storing the received Output vector $\ovt$. Nodes can accomplish noiseless communication and with each other through undirectional {\textnormal{Links}}, within which binary strings of messages are transmitted. Nodes can also implement arithmetic calculations.
\end{definition}


Note that we asumme that for each node (both \textit{input-node} and \textit{output-node}), it stores exactly one entry\footnote{It is reasonable to assume the one-to-one correspondence between nodes and entries. Another possible model may be a one-to-one correspondence between bits and nodes which is a possible direction in the future.} of the corresponding vector. Moreover, we assume \textit{timing} is available for the \textit{output-nodes} which means only non-trivial data are required to be transmitted and if during a period of time no data was received, the \textit{output-nodes} are able to automatically declare the corresponding entries in the \textit{recovery vector} zeros\footnote{The assumption about \textit{timing} is critical. In fact, if no \textit{timing} is available, every \textit{output-node} is required to receive at least $1$ bit with a constant \textit{communication distance}, which implies a lower bound on \textit{bit-meters} $\Omega\left(\li\right)$. This violates the spirit of compressive sensing as we want the energy (\textit{bit-meters}) required to be sub-linear in $\li$ in light of the sparsity on the signal.}.

Moreover, we assume that each node, including \textit{input} and \textit{output} nodes can behave as a bridge for communication between other nodes. Thus, any computation that can be performed by an intermediate node can be, in principle, be performed by an input or output node. Therefore, exactly $n+m$ nodes are sufficient for any decoding circuit. We assume that the computational nodes can communicate noiselessly with each other through unidirectional {\textnormal{links}} as defined above.

%
\begin{definition}[Communication Distance]
	Given two nodes $\textbf{x}$ and $\textbf{y}$, the {\textnormal{Communication Distance}} $\mathcal{D}\left(\textbf{x},\textbf{y}\right)$ is the {\em Euclidean distance} $||\textbf{x}-\textbf{y}||_{2}$ between nodes $\textbf{x}$ and $\textbf{y}$.
\end{definition}


Let $S\subset\Lambda_{V}$ be a collection of nodes. The number of bits communicated between $\textbf{x}, \textbf{y}\in S$ is denoted by $\mathcal{B}\left(\textbf{x}, \textbf{y}\right)\geq 0$. We then define our fundamental \textit{measure} of energy below.

\begin{definition}[Bit-meters]
	\label{def:bitmeter}
	The {\textnormal{Bit-meters}} or $\mu\left(\cdot\right)$ is a non-negative real-valued {\textnormal{measure}} from the power set over collections of nodes $\mathcal{P}\left(S\right)$ to the {\textnormal{extended real number line}} satisfying the following properties:
	\begin{enumerate}
		\item $\mu(\emptyset)$ = 0.
		\item For every $\textbf{x}, \textbf{y}\in S$, $\mu\left(\textbf{x},\textbf{y}\right)=\mathcal{B}\left(\textbf{x}, \textbf{y}\right)\times {\mathcal{D}\left(\textbf{x},\textbf{y}\right)}$.
		\item For every subset $E\subset S$, $\mu\left(E\right)=\sum_{\textbf{x},\textbf{y}\in E}{\mathcal{B}\left(\textbf{x}, \textbf{y}\right)\times {\mathcal{D}\left(\textbf{x},\textbf{y}\right)}/2}$.
		\item For every {\textnormal{Decoding Circuit}} (or {\textnormal{Sub-circuit}}) $DecCkt=\left(V,S\right)$, $\mu\left(DecCkt\right)=\mu\left(S\right)$. 
	\end{enumerate}
\end{definition}

Our main goals are to obtain asymptotic lower and upper bounds on {\it{bit-meters}} for decoding circuits implementing compressive sensing algorithms, \textit{i.e.}, find the bounds on $\mu\left(DecCkt\right)$ as $\li\rightarrow\infty$. Note that all the discussions are based on the implementation model summarized below.

\begin{definition}[Implementation Model $(\rho$,$\mu)$ ]
\label{IM}
{\textnormal{Implementation Model $(\rho, \mu)$}} denotes the pair of {\textnormal{Decoding Circuit}} $DecCkt$ defined on a {\textnormal{$\rho$--Lattice}} and {\textnormal{Bit-meters}} $\mu$. 
\end{definition}

\section{Main Results}
\label{Main Results}

Let $R=\lo/\li$ be the rate of compressive sensing. As our first result, we obtain the following general lower bound on \textit{bit-meters}.

\begin{theorem}[Lower Bound]
\label{thm:lower bound}
Consider the {\textnormal{Sparsity Model $(n,m,p)$}}. For any {\textnormal{encoding matrix}} $\mathcal{A}$ and {\em decoding circuit} $DecCkt$  implemented on the {\textnormal{Implementation Model $(\rho,\mu)$}}, we have:
\begin{enumerate}
	\item The {\textnormal{average block error probability}} $\ep\geq 0$.
	\item The {\textnormal{bit-meters}} 
	\begin{align}
	\label{lower bound}
	&\mu{(DecCkt)} \geq \frac{\rho C_{0}}{24\sqrt{2}}\li\left(\frac{1}{4}-R\right)
	p\precision\sqrt{\left(\frac{1}{2R}-\frac{1-R^2+2R}{(1+R)^2}\right)}\sqrt{\log_p{10\ep}}.
	\end{align}
\end{enumerate}
\end{theorem}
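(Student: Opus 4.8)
Statement~(1) is immediate since a probability is nonnegative, so I focus on the bit-meters bound~(2).

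\textbf{Reduction to information crossing spatial cuts.} The plan is to lower bound $\mu(DecCkt)$ by the information that must physically traverse the circuit. Fix a coordinate axis and, for each abscissa $x$, let $B(x)$ denote the total number of bits exchanged across the vertical line at $x$. Since $\mathcal{D}(\mathbf u,\mathbf v)=\|\mathbf u-\mathbf v\|_2\ge |x_{\mathbf u}-x_{\mathbf v}|=\int \mathbf 1[x\text{ separates }\mathbf u,\mathbf v]\,dx$, summing the terms of Definition~\ref{def:bitmeter} gives $\mu(DecCkt)\ge \tfrac12\int B(x)\,dx$. It therefore suffices to lower bound, over a well-chosen window of cut positions, how many bits must cross. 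Here the \emph{timing} assumption is essential: a zero output-node declares itself zero for free, so only the $\approx \li p$ \emph{nonzero} output-nodes demand communicated bits, and this is exactly what makes the bound scale with $p$ rather than with $\li$.

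\textbf{Reliability forces long-range aggregation.} For any cut, the nonzero output-nodes on the far side must reconstruct their entries of $\rvt$ to relative precision $2^{-\precision}$ with block-error at most $\ep$. Applying Fano's inequality to the independent Bernoulli$(p)$ source of Definition~\ref{pcs}, with each nonzero value quantized to $\precision$ bits, the number of bits crossing the cut is at least $\precision$ per far-side nonzero, minus a correction controlled by $\ep$. Crucially, since each measurement $Y_j=(\mm\ivt)_j$ is a dense aliased combination of the whole signal, disambiguating a single nonzero coordinate with failure probability $\ep$ requires combining $\Omega(\log_p 10\ep)$ essentially independent measurement contributions; laid out on a $\rho$--lattice, these sources cannot all be packed within $o\!\left(\rho\sqrt{\log_p 10\ep}\right)$ of the destination, so each nonzero reconstruction contributes communication distance $\gtrsim \rho\sqrt{\log_p 10\ep}$. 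This $2$D packing of the reliability requirement is what produces the factor $\sqrt{\log_p 10\ep}$.

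\textbf{Accounting for the rate and optimizing the window.} The $\lo=R\li$ input-nodes can supply some demanded information without crossing a given cut, so the net bits forced across must discount the input-nodes lying on the far side. Worst-casing the adversary's interleaving of the $\li$ output- and $\lo$ input-nodes, and using the packing bound (at most $\mathcal O(\text{area}/\rho^2)$ nodes in any region, hence spread $\gtrsim \rho\sqrt{\li(1+R)}$ along the axis), I would integrate the per-cut bound over the window of admissible cut positions. The fraction of nonzeros that remain beyond any local input supply, optimized over the location and width of the window, yields the geometric factors $\left(\tfrac14-R\right)$ and $\sqrt{\tfrac{1}{2R}-\tfrac{1-R^2+2R}{(1+R)^2}}$; combining these with the per-nonzero payload $\precision$, the count $\approx \li p$, and the distance $\rho\sqrt{\log_p 10\ep}$ reproduces the claimed bound~\eqref{lower bound} up to the constant $C_0/(24\sqrt2)$.

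\textbf{Main obstacle.} The delicate point is making the per-cut argument \emph{uniform} over every encoding matrix $\mm$ and every node placement at once: because measurements are global linear functionals, I must argue that the required information genuinely has to \emph{cross} a spatial cut rather than merely exist somewhere in the circuit. I expect to resolve this by tracking only those far-side nonzeros whose values cannot be inferred from the input-nodes on their own side, and by converting the reliability requirement into a hard lower bound on aggregation distance through the packing radius, so that no layout can route the $\approx \li p\precision$ payload bits over a sublinear aggregate distance. Once this uniform per-cut bound is secured, the remaining steps (the Fano correction, the interleaving worst case, and the window optimization producing $C_0$ and the $R$-dependent factors) are routine calculus and bookkeeping.
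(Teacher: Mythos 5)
There is a genuine gap, and it is exactly the step you defer to your ``Main obstacle'' paragraph. Your reduction $\mu(DecCkt)\ge\tfrac12\int B(x)\,dx$ is valid, but $B(x)$ cannot be usefully lower bounded for straight-line cuts, uniformly over layouts and encoding matrices. The only mechanism that forces bits across a boundary is that the nodes beyond it cannot resolve their entries from the measurements they already hold; under the Sparsity Model $(n,m,p)$ this ``over-saturation'' event (more nonzeros than local measurements) has probability roughly $p^{\#\text{far-side measurements}}$. For a half-plane the exponent is $\Theta(m)$, so the event has probability astronomically below any target $\ep$, and Fano yields no contradiction. The quantity $\log_p 10\ep$ must enter as the \emph{size of the isolated region}: a region can be starved of information only if it contains fewer than about $\log_p(10\ep)$ measurements, which in two dimensions forces a partition into $L=2m/\log_p(10\ep)$ cells bounded in \emph{both} coordinates, together with a buffer annulus so that ``bits entering the cell'' convert into bit-meters at least proportional to the buffer width. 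This is precisely the paper's stencil partition (Definition~\ref{def:sp}): choose the shift so the inner parts cover $(1-2\eta)^2 n$ output-nodes (Lemma~\ref{lemma_1}), show a constant fraction of cells are non-locally decodable under worst-case interleaving (Lemma~\ref{lemma_2}), and run the contradiction of Lemma~\ref{lemma_5} --- if any such cell consumes fewer than $\eta\rho(\Lambda_0)C_0(n_i^{\textnormal{inside}}-m_i)p\precision/3$ bit-meters, then conditioning on over-saturation (probability $\ge p^{2m/L}$) and applying Fano (Lemmas~\ref{lemma_4} and~\ref{lemma_20}) gives $\ep\ge p^{2m/L}/9=10\ep/9$, a contradiction.

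Your second step is also false as stated: recovering a single nonzero with failure probability $\ep$ does \emph{not} require aggregating $\Omega(\log_p 10\ep)$ measurements per nonzero. Realization-wise, the paper's own achievable schemes (CA/SA) are counterexamples --- with high probability most nonzeros are isolated in their group and are decoded from $\Theta(1)$ nearby measurements over $\Theta(1)$ distance. The necessity of long-range communication is inherently a conditional, region-level statement (it materializes only on the rare over-saturation event, via the contradiction above), not a per-coordinate packing fact. Your per-nonzero accounting also drops a rate factor: the region containing $\log_p(10\ep)$ measurements contains $\approx\log_p(10\ep)/R$ nodes in total, so its radius is $\rho\sqrt{\log_p(10\ep)/R}$, not $\rho\sqrt{\log_p(10\ep)}$; this $1/\sqrt{R}$ is what makes the bound $\Theta(\sqrt{nk})$ rather than $\Theta(k)$ in the regime $m=\Theta(k)$, $\log_p(10\ep)=\Theta(1)$. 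So while your ingredient list (Fano, packing radius, payload $\approx \li p\precision$, a $\sqrt{\log_p 10\ep}$ distance scale) matches the paper's, the proposal does not assemble into a proof: its core step is left open, and the sketched route through one-dimensional cuts and per-nonzero aggregation distances cannot close it.
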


As a consequence of the above theorem, we derive the following corollary that states the asymptotic scaling of the lower bound with respect to $\li$. This serves as a benchmark for our algorithm design subsequently.

\begin{corollary}[Scaling for Lower Bound]
	\label{corollary}
	Consider the {\textnormal{Sparsity Model $(n,m,k)$}}. Assume the {\textnormal{precision}} $\precision=\Theta\left(1\right)$. For any {\textnormal{encoding matrix}} $\mathcal{A}$ and {\em decoding circuit} $DecCkt$  implemented on the {\textnormal{Implementation Model $(\rho,\mu)$}}, we have:
	\begin{enumerate}
		\item The {\textnormal{average block error probability}} $\ep\geq 0$.
		\item The {\textnormal{bit-meters}} 
	\begin{align*}
	&\mu{(DecCkt)} = \Omega\left( \sqrt{\frac{ k^2}{R\log \li}}\min\left(\sqrt{k},\sqrt{\log{\frac{1}{\ep}}}\right)\right).
	\end{align*}
	\end{enumerate}
\end{corollary}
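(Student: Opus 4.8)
The plan is to obtain the corollary by specializing the general bound of Theorem~\ref{thm:lower bound} to the combinatorial parameterization and then performing a routine asymptotic simplification of each factor. Since the \textnormal{Sparsity Model}~$(\li,m,k)$ corresponds to the \textnormal{Sparsity Model}~$(\li,m,p)$ with $p=k/\li$, I would first substitute $p=k/\li$ into the bound~\eqref{lower bound} of Theorem~\ref{thm:lower bound}. The leading combinatorial factor then collapses as $\li(\tfrac14-R)p\precision = k(\tfrac14-R)\precision$, which is $\Theta(k)$ once we use $\precision=\Theta(1)$ together with the fact that the number of measurements is sublinear, so that $R=m/\li=o(1)$ and hence $(\tfrac14-R)\to\tfrac14$.

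Next I would extract the $R$-dependence of the remaining algebraic factor. Expanding $\tfrac{1}{2R}-\tfrac{1-R^2+2R}{(1+R)^2}=\tfrac{1}{2R}-1+\tfrac{2R^2}{(1+R)^2}$ and using $R=o(1)$ shows that the dominant term is $\tfrac{1}{2R}$, so $\sqrt{\tfrac{1}{2R}-\tfrac{1-R^2+2R}{(1+R)^2}}=\Theta(1/\sqrt{R})$. Combining this with the $\Theta(k)$ prefactor gives an overall scaling of $\Theta\bigl(k/\sqrt{R}\bigr)$ multiplying the logarithmic factor $\sqrt{\log_p 10\ep}$.

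It remains to analyze $\sqrt{\log_p 10\ep}$, and this is where the $\min$ in the statement originates. Writing $\log_p(10\ep)=\log(1/(10\ep))/\log(1/p)=\log(1/(10\ep))/\log(\li/k)$ and discarding the harmless constant $\log 10$, the numerator is $\Theta(\log(1/\ep))$ in the small-error regime $\ep<1/10$ (which is also what keeps the bound of Theorem~\ref{thm:lower bound} nontrivial). For the denominator I would use the crude but sufficient bound $\log(\li/k)\le\log\li$, which accounts for the $\log\li$ appearing in the corollary. The subtle point is the $\min$: Theorem~\ref{thm:lower bound} is stated for the probabilistic \textnormal{Sparsity Model}, whereas the corollary concerns the combinatorial one. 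To transfer the bound I would couple the two models through the number of nonzero entries $K\sim\mathrm{Binomial}(\li,k/\li)$; by the Chernoff concentration noted after Definition~\ref{def:csm}, $K$ deviates from $k$ only with probability $e^{-\Theta(k)}$. Running a combinatorial decoder on probabilistic inputs therefore cannot drive the block error below this $e^{-\Theta(k)}$ floor, so the effective quantity is $\min\bigl(\log(1/\ep),\Theta(k)\bigr)$ rather than $\log(1/\ep)$; taking square roots yields $\min\bigl(\sqrt{\log(1/\ep)},\sqrt{k}\bigr)$.

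Assembling the three factors gives
\[
\mu(DecCkt)=\Omega\!\left(\frac{k}{\sqrt{R}}\cdot\frac{\min\bigl(\sqrt{k},\sqrt{\log(1/\ep)}\bigr)}{\sqrt{\log\li}}\right)=\Omega\!\left(\sqrt{\frac{k^2}{R\log\li}}\,\min\Bigl(\sqrt{k},\sqrt{\log\tfrac{1}{\ep}}\Bigr)\right),
\]
which is the claimed bound. I expect the main obstacle to be making the model-coupling step rigorous: one must verify carefully that the $e^{-\Theta(k)}$ concentration error is the correct floor and that it propagates through the $\log_p$ and the square root to produce exactly the stated $\min$, rather than merely asserting the asymptotic equivalence of the two sparsity models. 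The factor-by-factor simplifications, by contrast, are routine.
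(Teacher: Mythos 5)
Your proposal is correct and follows essentially the same route as the paper's own proof: substitute $p=k/\li$ into Theorem~\ref{thm:lower bound} using the concentration-based equivalence of the two sparsity models, simplify the algebraic factor to $\Theta(1/\sqrt{R})$ via $R=o(1)$, and bound $\log(1/p)=\log(\li/k)\leq\log\li$. If anything, your account of where $\min\left(\sqrt{k},\sqrt{\log(1/\ep)}\right)$ comes from --- the $e^{-\Theta(k)}$ coupling floor between the combinatorial and probabilistic models --- is more explicit than the paper's, which invokes Hoeffding's inequality to justify substituting $k=np$ and then states the final bound with the $\min$ without spelling out that step.
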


For the regime $\lo=\Theta(k)$, $k= n^{1-\beta}$ where $\beta\in (0,1)$, and $\ep=e^{-\Omega\left(\lo\right)}$ of our interests, we derive the following upper bound which matches the order of \textit{bit-meters} in Corollary~\ref{corollary}.

\begin{theorem}[Upper Bound]
	\label{thm:bitmeter}
	Consider the {\textnormal{Sparsity Model $(\li,\lo,k)$}}. Let $k= n^{1-\beta}$ for some $\beta\in (0,1)$, there exist an {\textnormal{encoding matrix}} $\mathcal{A}$ and a {\em decoding circuit} $DecCkt$ implemented on the {\textnormal{Implementation Model $(\rho,\mu)$}} such that
	\begin{enumerate}
		\item The {\textnormal{average block error probability}}  $\ep=\mathcal{O}\left(1/\sqrt{k}\right)$.
		\item The {\textnormal{number of measurements}} $m=\Theta(k)$.
		\item The {\textnormal{precision}} $\precision=\Theta\left(1\right)$.
		\item The {\textnormal{bit-meters}}  $\mu\left(DecCkt\right)=\mathcal{O}\left(\sqrt{\li k}\right)$.
	\end{enumerate}
\end{theorem}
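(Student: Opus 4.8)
The plan is to exhibit one pair $(\mm, DecCkt)$ for which constraints (1)--(3) hold essentially by design and for which the bit-meters bound (4) is the substantive claim. For the code I would take $\mm$ supported on a sparse, complex-weighted bipartite graph with the $\li$ signal positions as variable nodes and $\lo=\Theta(k)$ check nodes (a SHO-FA-type design, cf.~\cite{bakshi2012sho,PawR:12}), so that a check adjacent to a single active variable exposes that variable's index and value at once, and a peeling decoder resolves non-zeros one at a time. Then (2) $\lo=\Theta(k)$ is built in; (3) $\precision=\Theta(1)$ holds because $|X_i|\le\up$, so $\Theta(1)$ quantization bits keep the relative $\nr$-error below $2^{-\precision}$; and (1) $\ep=\mathcal{O}(1/\sqrt k)$ follows from the standard stopping-set analysis of peeling on a random sparse graph, with the constant number of checks per non-zero tuned so the stall probability decays like $k^{-1/2}$.

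The heart of the proof is (4), and this is where the multi-stage structure enters. I would embed the $\li$ variable/output nodes on the $\Theta(\sqrt\li)\times\Theta(\sqrt\li)$ grid with index equal to physical location, and impose a quadtree on the substrate: at level $j$ there are $4^{j}$ cells of side $\Theta(\sqrt\li/2^{j})$, each holding $\Theta(\li/4^{j})$ positions. Decoding proceeds from coarse to fine, and at each level a cell uses $\mathcal{O}(1)$ locally stored measurements to decide which of its four children still contain a non-zero; all such messages stay inside the cell, hence travel distance $\mathcal{O}(\sqrt\li/2^{j})$. Because the timing assumption makes the $\li-k$ zero positions free, the only traffic comes from \emph{active} cells, and the number of active cells at level $j$ is $A_{j}=\mathcal{O}\!\left(\min(4^{j},k)\right)$: at coarse levels almost every cell is active, while once the cells are small enough to separate the non-zeros the count saturates at $\Theta(k)$. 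The transition occurs at the balance level $2^{j^{*}}=\Theta(\sqrt k)$, i.e.\ cell side $\Theta(\sqrt{\li/k})$.

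Summing the per-level bit-meters $A_{j}\cdot\mathcal{O}(1)\cdot\mathcal{O}(\sqrt\li/2^{j})$ via Definition~\ref{def:bitmeter} then splits into two geometric series: for $j\le j^{*}$ the terms behave like $2^{j}\sqrt\li$ and are dominated by the top term $2^{j^{*}}\sqrt\li=\Theta(\sqrt{\li k})$, while for $j> j^{*}$ the terms behave like $k\sqrt\li/2^{j}$ and are dominated by their first term $k\sqrt\li/2^{j^{*}}=\Theta(\sqrt{\li k})$. Hence $\mu(DecCkt)=\sum\mathcal{B}(\mathbf x,\mathbf y)\,\mathcal{D}(\mathbf x,\mathbf y)=\mathcal{O}(\sqrt{\li k})$ (the $\Theta(k)$ value bits, sent over $\mathcal{O}(1)$ distance once a non-zero is isolated, contribute only $\mathcal{O}(k)=o(\sqrt{\li k})$). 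This matches the $\Omega(\sqrt{\li k})$ of Corollary~\ref{corollary} in this regime; equivalently, the general $\mathcal{O}(\sqrt{\li k/\log\li}\,\sqrt{\log(1/\ep)})$ scaling specializes to $\mathcal{O}(\sqrt{\li k})$ once $\ep=\mathcal{O}(1/\sqrt k)$, since then $\sqrt{\log(1/\ep)}=\Theta(\sqrt{\log\li})$.

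I expect the main obstacle to be reconciling the strict per-level locality with decodability and with the fluctuations of the cell occupancies. Confining every measurement used at level $j$ to a level-$j$ cell is what caps distances at $\mathcal{O}(\sqrt\li/2^{j})$, but a measurement matrix that local and multiresolution must still (i) fit a total budget of $\lo=\Theta(k)$ measurements across all $\mathcal{O}(\log\li)$ levels and (ii) let peeling resolve the cells it touches. The plan is to endow each level with its own freshly randomized local code on the currently active positions, so that the expansion needed for peeling is restored at every scale while distances remain bounded by that scale; the $\mathcal{O}(\log\li)$ levels can share the $\Theta(k)$ measurements because, by timing, only the $A_{j}$ active cells ever draw on them. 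Cells that overflow---the binomial/Poisson tail of the occupancy at the balance scale---would break a constant per-cell budget, so they must be swept up by a small auxiliary stage whose extra traffic is bounded by a tail estimate and folded into the $\mathcal{O}(1/\sqrt k)$ error event. The quantitative linchpin is precisely the bound $A_{j}=\mathcal{O}(\min(4^{j},k))$ together with the two geometric sums above; establishing it uniformly, and checking that it survives the overflow handling, is the crux.
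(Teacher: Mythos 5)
Your bit-meters accounting---two geometric sums balanced at the scale where cells have side $\Theta(\sqrt{\li/k})$, total $\mathcal{O}(\sqrt{\li k})$---is exactly the arithmetic of the paper's proof (Lemma~\ref{lemma:prop} combined in Equations~(\ref{noprecision})--(\ref{equation2_thm4})), but your construction has a genuine gap at precisely the point you flag as the crux: the measurement budget for levels finer than the balance scale $4^{j^*}=\Theta(k)$. The encoding matrix $\mm$ is a fixed, non-adaptive linear map whose rows are chosen before $\ivt$ is observed. So you cannot ``endow each level with its own freshly randomized local code on the \emph{currently active} positions,'' nor can the $\mathcal{O}(\log\li)$ levels ``share'' the $\Theta(k)$ measurements according to which cells turn out to be active: the timing assumption spares you the \emph{bit-meters} of unused rows, but every row that might be needed must exist in $\mm$ and counts toward $\lo$. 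Since you cannot know in advance which of the $4^{j}\gg k$ cells at a level $j>j^{*}$ will be active, per-level locality forces you to pre-place $\Omega(1)$ rows supported inside \emph{every} such cell, giving $\lo=\Omega(4^{j})$; carried down to the scale at which $k$ random non-zeros are actually separated (closest-pair distance $\Theta(\sqrt{\li}/k)$, i.e.\ $\Theta(k^2)$ cells), this makes $\lo=\omega(k)$ and kills claim~(2) of the theorem. So the top-down refinement can have local traffic or $\lo=\Theta(k)$, but not both.

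The paper's construction never refines below the $\li/k$ scale, and this is the idea your proposal is missing: inside a group, a singleton is located \emph{algebraically}, not spatially, by a SHO-FA-type identification measurement whose distinct phases $e^{\iota\Theta^{I}_{i,j}}$ encode the index (Figure~\ref{fig:st1}), so $\mathcal{O}(1)$ local measurements per group of $\li/(C_{\textnormal{CA}}k)$ positions suffice to detect, locate, and decode an isolated non-zero. Collisions are then handled by going \emph{coarser}, not finer: groups are merged by a constant factor $\phi$ per stage, so fresh measurements number $\Theta(k/\phi^{i-1})$ at stage $i$ and sum geometrically to $\Theta(k)$; and the $\Theta(\log k)$ stragglers surviving all merging stages---too few for concentration arguments to apply---are finished by a global clearing stage running SHO-FA with $\Theta(\sqrt{k})$ fresh measurements (Theorem~\ref{thm:shofa}), whose failure probability is what actually produces $\ep=\mathcal{O}(1/\sqrt{k})$, and whose traffic $\Theta(\sqrt{k})\times\mathcal{O}(\sqrt{\li})$ is again $\mathcal{O}(\sqrt{\li k})$. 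Your ``small auxiliary stage'' is implicitly asked to do this same job but is left unspecified, and the error bound you invoke (stopping-set analysis of a random sparse graph) applies to a spatially unconstrained SHO-FA graph, not to the quadtree decoder you actually run. Repairing your scheme---replacing sub-$j^{*}$ refinement by phase identification, and the auxiliary stage by a global SHO-FA clearing stage---essentially lands you back at the paper's Chain/Shotgun algorithms.
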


As a consequence of Corollary~\ref{corollary} and the above upper bound, we state the following corollary as a conclusion.

\begin{corollary}[Order-tight Bound]
	\label{coro:tight}
		Consider the {\textnormal{Sparsity Model $(\li,\lo,k)$}}. Let $k= n^{1-\beta}$ for some $\beta\in (0,1)$ and $\lo=\Theta\left(k\right)$. There exist an {\textnormal{encoding matrix}} $\mathcal{A}$ and a {\em decoding circuit} $DecCkt$ implemented on the {\textnormal{Implementation Model $(\rho,\mu)$}} such that 
	\begin{align*}
	&\mu{(DecCkt)} = \Theta\left( \sqrt{\frac{\li k}{\log\li}}\sqrt{\log{\frac{1}{\ep}}}\right).
	\end{align*}
\end{corollary}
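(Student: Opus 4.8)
The statement is a direct synthesis of the lower-bound scaling in Corollary~\ref{corollary} and the achievability in Theorem~\ref{thm:bitmeter}, so the plan is to instantiate the circuit guaranteed by the latter and then feed its parameters back into the former, checking that both sides collapse to the same order. First I would invoke Theorem~\ref{thm:bitmeter} to obtain, for $k=\li^{1-\beta}$ and $\lo=\Theta(k)$, an encoding matrix $\mathcal{A}$ and a decoding circuit $DecCkt$ with precision $\precision=\Theta(1)$, error probability $\ep=\Theta(1/\sqrt{k})$, and $\mu(DecCkt)=\mathcal{O}(\sqrt{\li k})$. The only nontrivial bookkeeping is to rewrite $\sqrt{\li k}$ in the target form: since $\ep=\Theta(1/\sqrt{k})$ and $k=\li^{1-\beta}$, we have $\log(1/\ep)=\tfrac12\log k\,(1+o(1))=\tfrac{1-\beta}{2}\log\li\,(1+o(1))=\Theta(\log\li)$, whence $\sqrt{\li k}=\Theta\!\left(\sqrt{\tfrac{\li k}{\log\li}}\sqrt{\log(1/\ep)}\right)$. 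This already yields the $\mathcal{O}$ half of the claimed $\Theta$.

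For the matching lower bound, I would apply Corollary~\ref{corollary} to this very circuit. Substituting the rate $R=\lo/\li=\Theta(k/\li)$ gives $k^2/R=\Theta(\li k)$, so the prefactor becomes $\sqrt{k^2/(R\log\li)}=\Theta(\sqrt{\li k/\log\li})$. It then remains to resolve the minimum $\min(\sqrt{k},\sqrt{\log(1/\ep)})$: because $\log(1/\ep)=\Theta(\log\li)=o(k)$ in this regime, the minimum equals $\sqrt{\log(1/\ep)}$ for all sufficiently large $\li$. Hence Corollary~\ref{corollary} gives $\mu(DecCkt)=\Omega(\sqrt{\li k/\log\li}\,\sqrt{\log(1/\ep)})=\Omega(\sqrt{\li k})$, matching the upper bound up to a constant factor.

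Combining the two directions yields $\mu(DecCkt)=\Theta(\sqrt{\li k})=\Theta(\sqrt{\li k/\log\li}\,\sqrt{\log(1/\ep)})$, which is the stated bound. I do not expect a genuine mathematical obstacle here, since everything reduces to the two previously established bounds; the one point requiring care is that both bounds must be evaluated for the \emph{same} circuit and hence the same error probability $\ep$, and that $\ep$ must decay only polynomially in $\li$ (as $\ep=\Theta(1/\sqrt{k})$ guarantees). It is precisely this polynomial decay that makes $\log(1/\ep)=\Theta(\log\li)$, which simultaneously (i) selects the $\sqrt{\log(1/\ep)}$ branch of the minimum in the lower bound and (ii) cancels the $1/\sqrt{\log\li}$ factor against $\sqrt{\log(1/\ep)}$ so that both bounds meet at $\Theta(\sqrt{\li k})$. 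Were $\ep$ super-polynomially small the minimum would instead saturate at $\sqrt{k}$ and the two bounds would separate; indeed, consistency of the lower bound with the upper bound $\mu=\mathcal{O}(\sqrt{\li k})$ already forces $\min(k,\log(1/\ep))=O(\log\li)$, so verifying this regime condition is the real crux of the argument.
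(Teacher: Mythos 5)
Your proposal is correct and takes essentially the same route as the paper: the paper likewise obtains $\mu(DecCkt)=\mathcal{O}(\sqrt{\li k})$ from the staged construction behind Theorem~\ref{thm:bitmeter}, converts it to $\mathcal{O}\bigl(\sqrt{\li k/\log\li}\,\sqrt{\log(1/\ep)}\bigr)$ via $\log(1/\ep)=\Theta(\log\li)$ (using $\ep=\mathcal{O}(1/\sqrt{k})$ and $k=\li^{1-\beta}$), and then invokes Corollary~\ref{corollary} for the matching $\Omega$. If anything you are more explicit than the paper, which never resolves the $\min(\sqrt{k},\sqrt{\log(1/\ep)})$ in its lower bound nor records that $\ep$ must not decay super-polynomially; your observation that consistency of the two bounds forces $\min(k,\log(1/\ep))=\mathcal{O}(\log\li)$ supplies exactly the step the paper elides.
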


In the next section, we give an overview of the proofs. The detailed proofs can be found in Appendices~\ref{appendix:lb}-\ref{appendix:up}.

\section{Main Ideas}
\label{sec:key}
\subsection{\em Lower Bound}
\label{Lower Bound}

In this section, we describe the main ideas from the derivation of the lower bound. Using the "Stencil-partition" idea introduced by~\citep{grover2013information}, we divide the entire circuit into several sub-circuits\footnote{Note that since we define the circuit using \textit{lattice} framework, it natural to define \textit{Stencil-partition} using \textit{sub-lattice}. Hence we call each sub-circuit "Parallelepiped" sometimes in the lemmas at Appendix~\ref{appendix:lb}} and find the minimal number of bits communicated between each sub-circuit.


\begin{figure}
	\centering
	\includegraphics[scale=0.35]{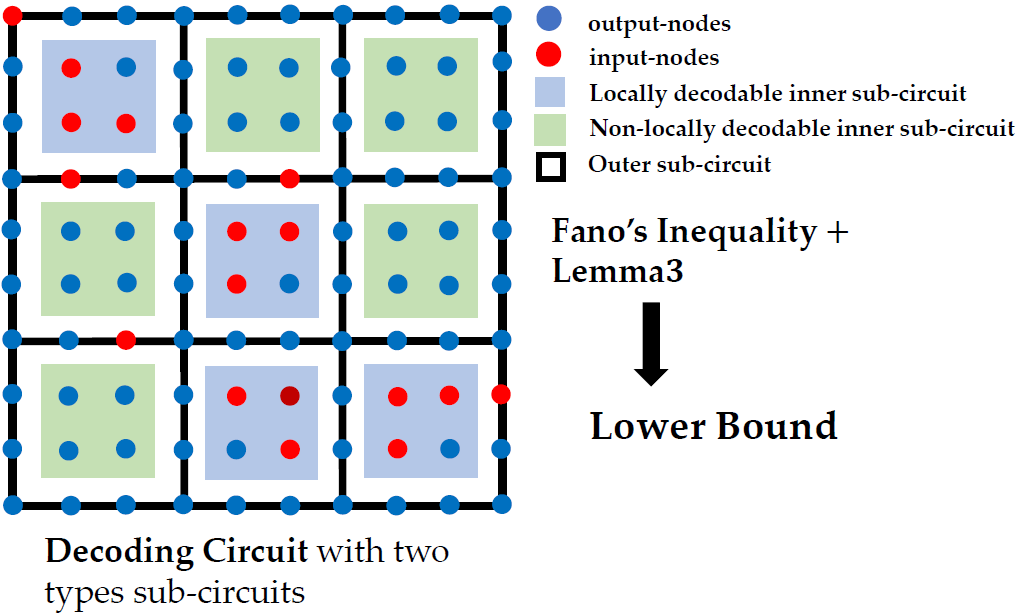}
	\caption{This conceptual graph illustrates the idea of the derivation of lower bound. The figure contains two types of sub-circuits: \textit{locally decodable} sub-circuits and \textit{non-locally decodable} sub-circuits. In details, we have the number of sub-circuits $L=9$ and $m=17$ thus $2m/L=38/9$ and using Lemma~\ref{lemma_2} the classifications of sub-circuits types are provided in this figure using different colors. As a result, the {\em bit-meters} are bounded from below as Theorem~\ref{thm:lower bound} states.}
	\label{fig:lemma}
\end{figure}

\begin{definition}[Stencil-partition]
	\label{def:sp}
	For point any $\textbf{u}\in\Lambda$, a {\textnormal{Stencil$(\lambda,\eta,\textbf{u})$}} on  {\textnormal{Implementation Model$(\rho(\Lambda),\mu)$}} consists of the following: 
	\begin{enumerate}
	\item A {\textnormal{Sub-lattice $\Lambda_{0}\subset\Lambda$}} with order of {\textnormal{quotient}} $|\Lambda/\Lambda_{0}|=\lambda$.
	\item The {\em outer parts} of sub-circuits induced by the {\textnormal{cosets}} $\textbf{u}+\Lambda_{0}=\lbrace{\textbf{u}+\textbf{v}:\textbf{v}\in\Lambda_{0}}\rbrace$.
	\item The {\em inner parts} of sub-circuits induced by scaling each {\em outer part} using a {\em fractional parameter} $\eta$.
	\end{enumerate}
	Let the $i$-th sub-circuit have $m_i$ {\textnormal{input-nodes}} and $n_i$ {\textnormal{output-nodes}}  within the {\em outer part} of sub-circuit. Let $i$-th sub-circuit have $m_i^{{\textnormal{inside}}}$ {\textnormal{input-nodes}} and $n_i^{{\textnormal{inside}}}$ {\textnormal{output-nodes}} within the {\em inner part} of sub-circuit\footnote{If any computational node lies on the boundary of two {\em outer parts} of sub-circuits, then it is arbitrarily included in one of them.}.
\end{definition}

Figure~\ref{fig:lemma} shows the geometric ideas. We first use a \textit{stencil} to divide the decoding circuit into several sub-circuits. Each sub-circuit consists of an \textit{inner part} and \textit{outer bound} (see Figure~\ref{fig:stencil} for more details). Next, based on the ratio of numbers of \textit{input-nodes} and \textit{output-nodes} inside the sub-circuit, we define two types of sub-circuits: \textit{locally decodable} sub-circuits and \textit{non-locally decodable} sub-circuits. Then we argue that the fraction of inner sub-circuits whose \textit{output-nodes} can be fully decoded using the \textit{input-nodes} within itself is a constant smaller than one. We mainly focus on the second type of sub-circuits, since these sub-circuits do not have enough information to fully decode all \textit{output-nodes} from the \textit{input-nodes} within the sub-circuits. By using Fano's inequality~\citep{cover2012elements} we finally deduce that the \textit{inner parts} of \textit{non-locally decodable} sub-circuits must communicate with other sub-circuits, giving a bound on \textit{bit-meters} stated in Theorem~\ref{thm:lower bound}.


\subsection{\em Upper Bounds}
\label{Upper Bound}

In this section, we explore different constructions for the implementation-circuits of compressive sensing algorithms. The basic issues here are the locations of the two types of nodes (input and output) and how they communicate with each other. We consider two types of algorithms----algorithms with centrally located \textit{input-nodes}, and algorithms involving distributed arrangement of nodes. 



 In our regime of interests, \textit{i.e.,} $m = \Theta(k)$, the later design always dominates the former one. This gives us the insights that local-decoding helps significantly in reducing the energy consumed and approaching an order-optimal performance.
 
 \subsubsection{Centralized-Decoding Algorithms}
 
\begin{figure}
	\centering
		\includegraphics[scale=0.35]{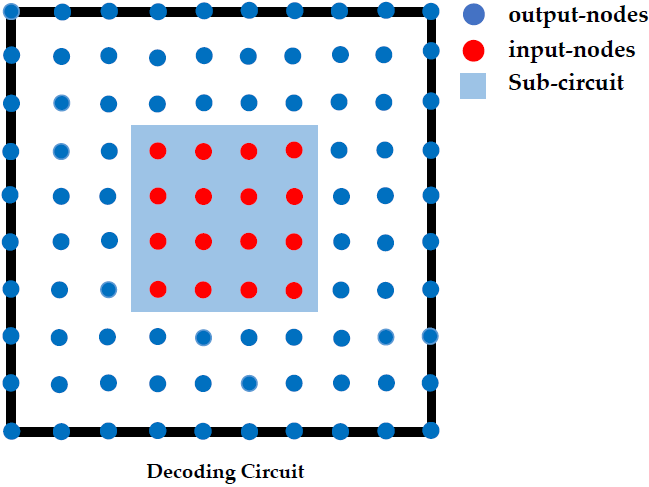}
		\caption{This conceptual graph illustrates one possible centralized arrangement of {\it{input-nodes}} and \textit{output-nodes} on a implementation circuit, whereby the lower bound on {\it{bit-meters}} could be derived. Within the decoding circuit, all $k$ \textit{input-nodes} are located in the central part and all $\li$ \textit{output-nodes} are put in the surrounding area of the central part.}
		\label{al:cent}
\end{figure}

Centralization of \textit{input-nodes} is perhaps the simplest construction possible. An interesting intuition here is that this design is better for those algorithms that have a relatively higher decoding complexity and lager number of measurements and thus, require a lager number of \textit{input-nodes} talk to each other frequently. For such algorithms, centralization improves the performance in terms of the error probability by enabling greater cooperations between nodes. Figure~\ref{al:cent} shows the idea of centralization.

However, for algorithms with relatively sparse \textit{encoding matrices} such that the average block error probability $\ep$ is of the order $\ep=e^{-\Omega{(m)}}$, we claim that
the centralized design involves a ``gap'' between the \textit{bit-meters} consumed and the scaling lower bound stated in Corollary~\ref{corollary} as the following arguments indicate. 


Consider the {\textit{Sparsity Model $(n,m,k)$}}, for any \textit{encoding matrix} $\mathcal{A}$ and decoding circuit $DecCkt$ implemented on the {\textit{Implementation Model $(\rho,\mu)$}}, assume precision $\precision=\Theta\left(1\right)$. Let $S_{\textnormal{Input}}$ be the set containing all \textit{input-nodes} and $S_{\textnormal{Output}}$ be the set containing all \textit{output-nodes}. If we centralize the {\em input-nodes} to make sure that for any $\textbf{x}\in S_{\textnormal{Input}}$, there is a positive $\rho=\Theta\left({\sqrt{m}}\right)$ such that the ball with {\em packing radius} $\rho_0$ contains $S_{\textnormal{Input}}$. Since there are $\li$ {\em output-nodes}, on average, chosen $\textbf{x}\in S_{\textnormal{Input}}$ and $\textbf{y}\in S_{\textnormal{Output}}$ uniformly at random, the expected {\em communication distance} ${\mathbb{E}[\mathcal{D}\left(\textbf{x},\textbf{y}\right)}]=\Theta{\left(\sqrt{\li}\right)}$. Since at least $\Theta\left(k\right)$ bits of information have to be transmitted from {\em input-nodes} to {\em output-nodes}, we obtain a lower bound on {\em bit-meters} $\mu\left(DecCkt\right)=\Omega\left(k\sqrt{\li}\right)$. Now based on our Corollary~\ref{corollary}, there are two cases----{\em average block error probability} $\ep=e^{-\Omega\left(\lo\right)}$ and $\ep=e^{-\mathcal{O}\left(\lo\right)}$. For the first case, we have $\mu{(DecCkt)} = \Omega\left( k\sqrt{\frac{ \li k}{\lo\log \li}}\right)$; for the second case, we have $\mu{(DecCkt)} = \Omega\left( k\sqrt{\frac{\li\lo}{\log \li}}\sqrt{\log{\frac{1}{\ep}}}\right)$ which implies $\mu{(DecCkt)} = \Omega\left( k\sqrt{\frac{\li}{\log \li}}\right)$. Therefore, for both cases, we conclude that the centralization of {\em input-nodes} is not able to achieve an order-tight upper bound on {\em bit-meters}.

%
%

 \subsubsection{Distributive-Decoding Algorithm}

We propose two energy-efficient compressive sensing algorithms shown in Figure~\ref{fig:dis}. Both use the idea of local decoding to reduce the \textit{bit-meters} required. Instead of arranging all the \textit{input-nodes} in the central part of the circuit, we distribute them throughout the circuit with carefully designed algorithms. As an intuition, since a large fraction of the communication is carried only in a small region, the consumed energy is reduced significantly. Leaving the formal definitions for Section~\ref{sec:ad} and Appendix~\ref{appendix:da}. We give brief descriptions of the algorithms in the next section~\ref{sec:ad} using a stage-by-stage manner accompanied by schematic graphs. The analysis of the performance is given Appendix~\ref{appendix:up}.

\begin{figure}
	\centering
	\includegraphics[scale=0.25]{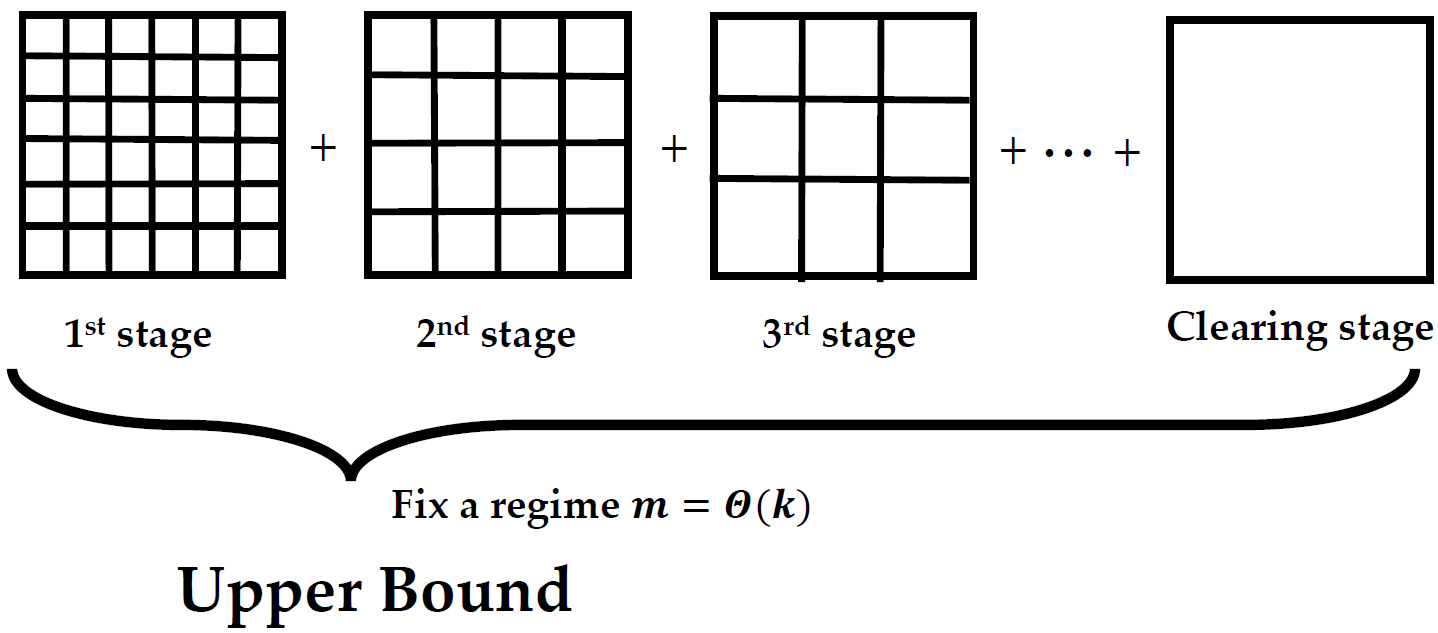}
	\caption{This graph illustrates the flow of distributive-decoding algorithms. We decode local information stage by stage and end up with a clearing stage to improve performance. As a result stated in Theorem~\ref{thm:bitmeter}, under our assumptions and fix our interested regime $m=\Theta(k)$, the \textit{bit-meters} are bounded by $\Theta\left({\sqrt{\li k}}\right)$.}
	\label{fig:dis}
\end{figure}

\section{Algorithms Description}
\label{sec:ad}
 \subsection{\em Chain Algorithm (CA)}
 \label{ad:ca}
 We describe the Chain Algorithm (CA) stage by stage as Figure~\ref{fig:ca} shows. 
 \begin{figure}
 	\centering
 	\includegraphics[scale=0.35]{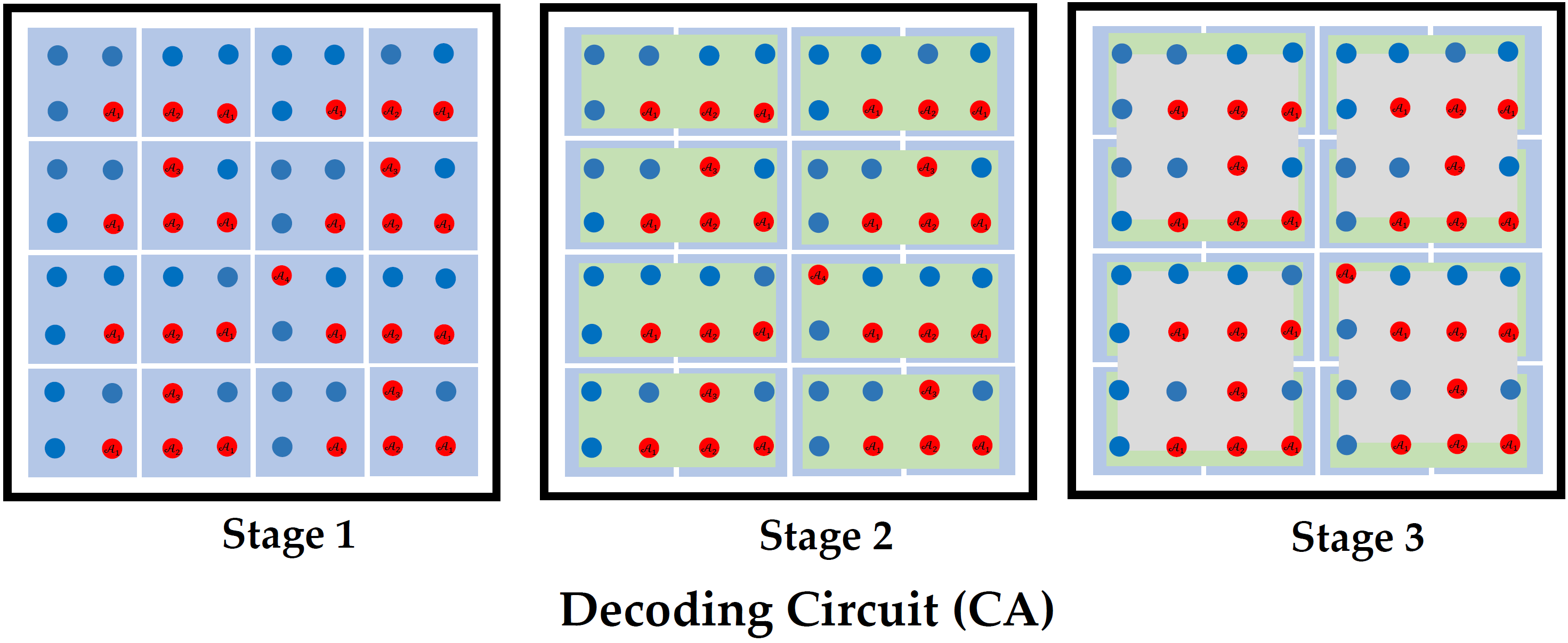}
 	\caption{This figure shows first three stages of Chain Algorithm (CA), with the number of nodes contained in each sub-circuit increase from $4$ to $8$ and $16$.}
 	\label{fig:ca}
 \end{figure}
\label{al:one}
\subsubsection{First Stage}
\label{sec:fs}
%

The {\it{input vector}} of length $\li$ is first divided into $C_{\textnormal{CA}}k$ groups which are compressed separately. $C_{\textnormal{CA}}$ is a constant chosen so as to achieve a desired error probability $\ep$ with details provided in the Appendix~\ref{appendix:up}. Each group contains $\li/C_{\textnormal{CA}}k$ entries. The decoding process is performed independently for each group. Thus, the corresponding decoder only needs to process these groups locally, \textit{i.e.}, it only needs to communicate within the local sub-circuits for the corresponding groups. Intuitively, this method leads to savings in energy since the distances for communication are reduced greatly for most of the communication between nodes. We use a \textit{measurement constant} $c$ number of measurements for each group. For each group, our decoding algorithm aims to resolve the non-zero entry if it contains exactly one non-zero entry. As a result, a constant proportion $\rho$ (depending on $C_{\textnormal{CA}}$ and $c$) of the total $k$ non-zero entries can be located and solved within $\precision$-bits of precision with a high probability. Figure~\ref{fig:st1} illustrates the partition and a possible way to construct the {\em encoding matrix} for finding the single non-zero entry in the {\em input vector}.

\begin{figure}
	\centering
	\includegraphics[scale=0.43]{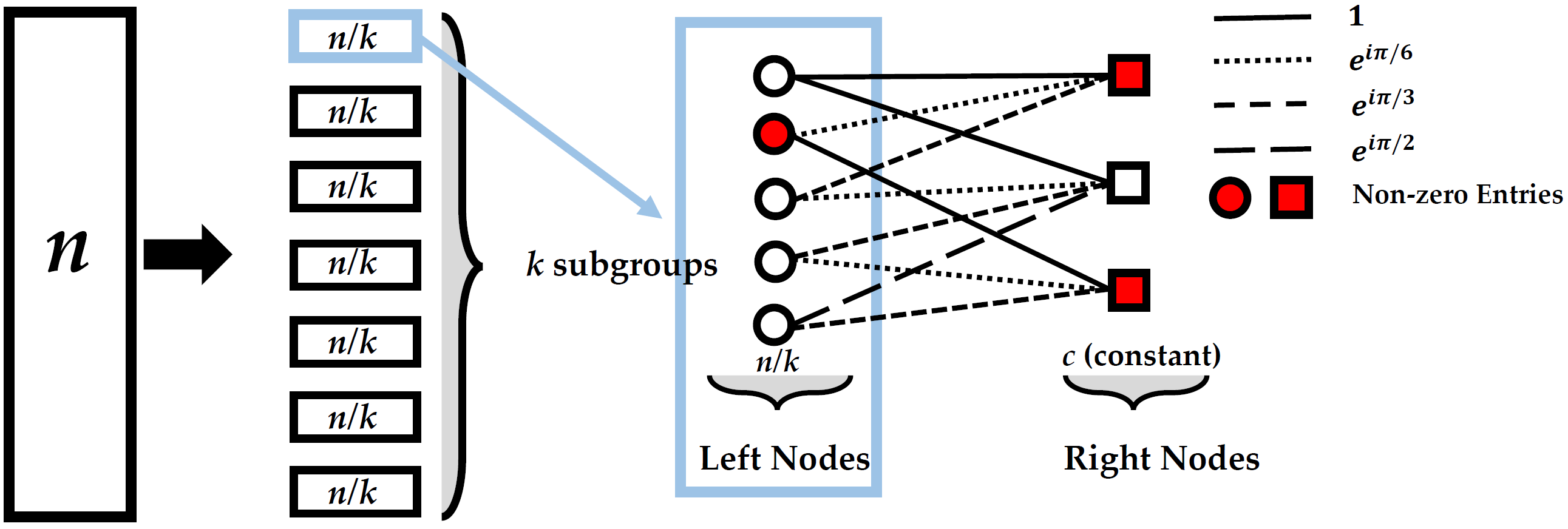}
	\caption{With the assumption $C_{\textnormal{CA}}=1$ and $c=3$ this schematic graph for the \textit{1st} stage illustrates the division of {\it{input vector}} and the construction of {\it{encoding matrix}} $\mathcal{A}$. In the first box with color of calamine blue, we exemplify part of the  \textit{Identification Phase} by a group of five left nodes and three right nodes. Note that for each node on the right, the weights of edges connected to it should be made unique, which is the requirement for \textit{Identification Phase}. In \textit{Verification Phase}, the connection is kept as the same whereas the weights $e^{\iota\Theta_{i,j}^{V}}$ (Here we keep using the same notations in ~\cite{bakshi2012sho}, where $i$ and $j$ are indexes for entries in the encoding matrix $\mathcal{A}$, $\iota$ denotes the positive square root of $-1$ in order to avoid confusions) for non-zero entries in the encoding matrix $\mathcal{A}$ are chosen uniformly in $[0,\pi / 2]$. The constructions are elaborated in section~\ref{mc} for set-up of \textit{encoding matrix}.}
	\label{fig:st1}
\end{figure}

Define $\phi=\lceil 1/(1-\rho)\rceil$ as a parameter for the remaining stages.

\subsubsection{Second Stage up to $\log_\phi(k/\log_2{k})$-th Stage}
\label{sec:sc}

For $i>1$, in the $i$-th stage we combine $\phi=\lceil 1/(1-\rho)\rceil$ of the groups coming from the $\left(i-1\right)$-th stage together. Note that in the $i$-th stage each group is only processed in a local region with area of order approximately $\phi^{i-1}n/C_{\textnormal{CA}}k$. Thus, in the $i$-th stage, $C_{\textnormal{CA}}k/\phi^{i-2}$ groups from the $\left(i-1\right)$-th stage merge into $C_{\textnormal{CA}}k/\phi^{i-1}$ new groups. Each new group contains $\phi^{i-1}\li/C_{\textnormal{CA}}k$ entries of the {\it{input vector}}. Next the decoding algorithm from the first stage~\ref{sec:fs} is implemented on each new group totally, {\em i.e.,} like in the first stage, the corresponding decoders need to handle the information for each group only. This algorithm continues up to $\log_\phi(k/\log_2{k})$ stages using the same \textit{measurement constant} $c$ as the number of measurements for each group. As a result, in the $i$-th stage approximately $\rho$ proportion of the total $C_{\textnormal{CA}}k/\phi^{i-1}$ non-zero entries can be located and solved in $\precision$-bits precision with a high probability which is of order $1-\sqrt{1/k}$. Figure~\ref{fig:st2} illustrates the combination process.

\begin{figure}
	\centering
	\includegraphics[scale=0.34]{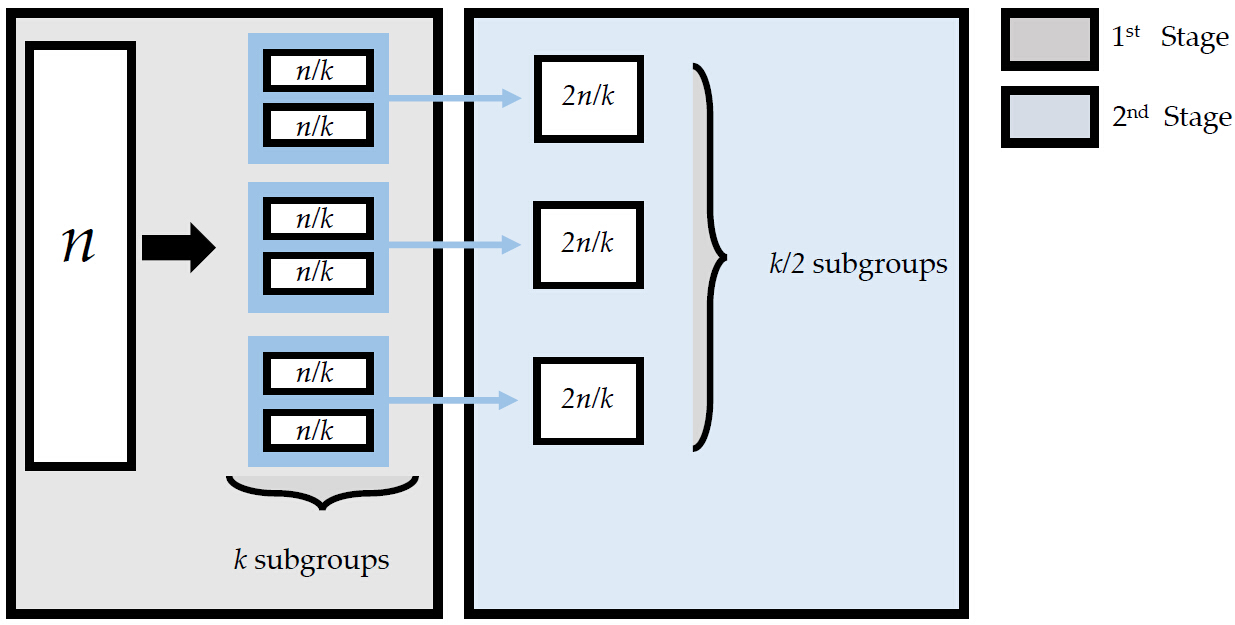}
	\caption{The schematic graph for the \textit{2nd} stage illustrates the combination processes in the coming stages.The groups in the left block with gray color are combined  two-by-two (assume $\rho=1/2$ and set $\phi=2$) to form the new groups in the right block with light blue color. Note that in a same way as the first stage exemplifies in Figure~\ref{fig:st1}, the two \textit{Identification Phase} and \textit{Verification Phase} are also implemented for each of the new group with the same number of measurements $c$ as the previous stages.}
	\label{fig:st2}
\end{figure}

\subsubsection{$\left(\log_\phi(k/\log_2{k})+1\right)$-th Stage (Clearing Stage)}

\begin{figure}
	\centering
	\includegraphics[scale=0.4]{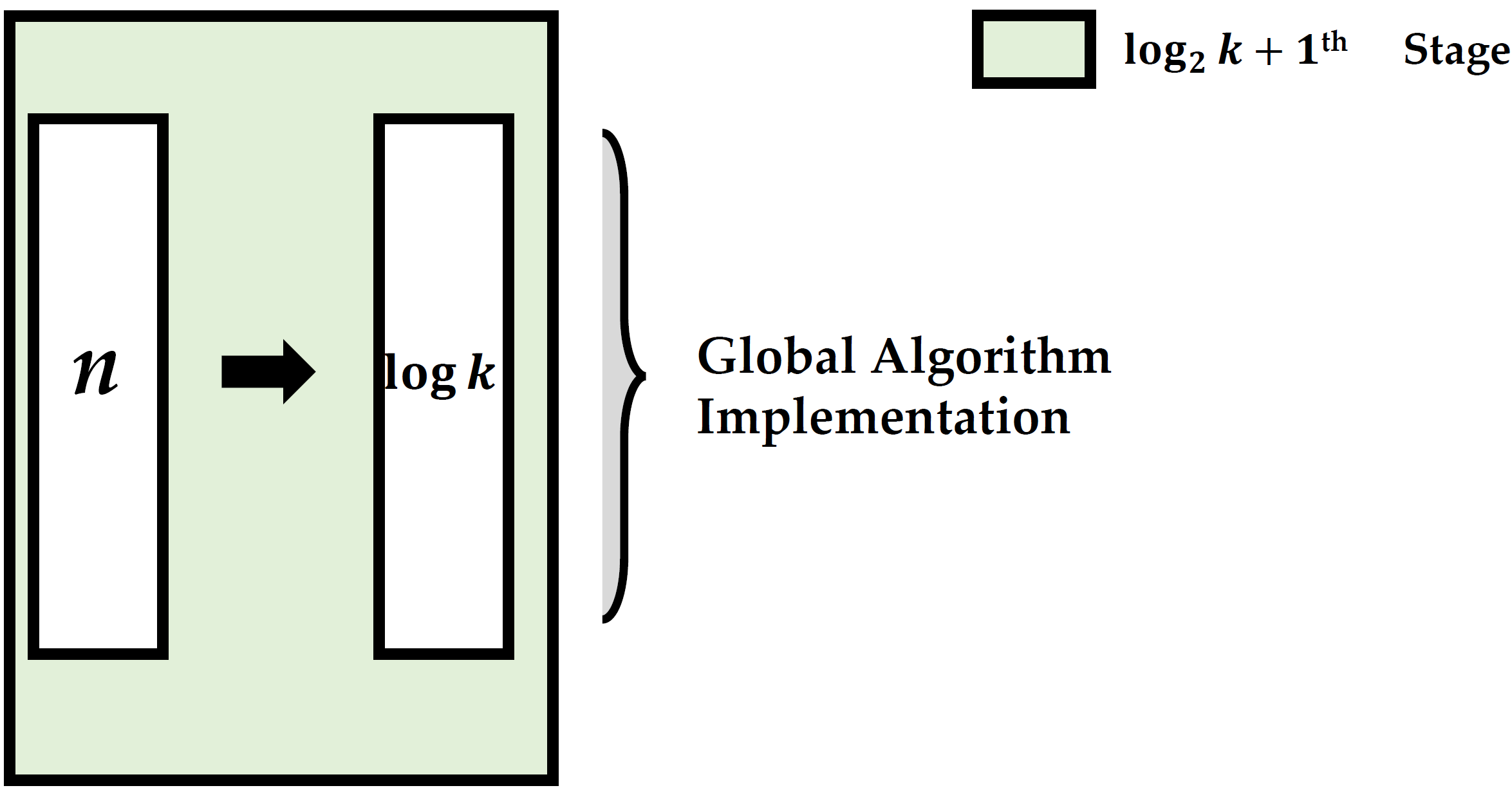}
	\caption{The schematic graph for the last stage illustrates the clearing process. This stage ends up the previous stages using $\Sigma(\log_2 k)$ measurements, and sue global communications between nodes for decoding the entire length-$\li$ {\it{input vector}} such that all the remaining non-zero entries are resolved with a high probability of order $\sqrt{1/k}$.}
   \label{fig:st3}
\end{figure}
After $\log_\phi(k/\log_2{k})$ stages, the algorithm stops forming group of nodes and, instead globally decodes the remaining unsolved non-zeros of {\it{input vector}} $\ivt$ given the information from the previous stages~\ref{sec:fs} and~\ref{sec:sc} along with $\Theta(\sqrt{k})$ new measurements. In contrast to the previous stages, each computed value is potentially communicated across the entire decoding circuit. This helps improve the performance with respect to the error probability. Overall, the algorithm achieves an average block error probability $\ep$ of order $\sqrt{1/k}$ and consumes {\it{bit-meters}} of order $\left(\sqrt{\frac{\li k}{\log{n}}}\sqrt{\log{\frac{1}{\ep}}}\right)$.

 \subsection{\em Shotgun Algorithm (SA)}
  \label{ad:sa}
  Next, we describe the Shotgun Algorithm (SA) stage by stage as Figure~\ref{fig:sa} shows.
\label{al:two}
\subsubsection{First Stage}
\label{sec:fs:sa}

\begin{figure}
	\centering
	\includegraphics[scale=0.35]{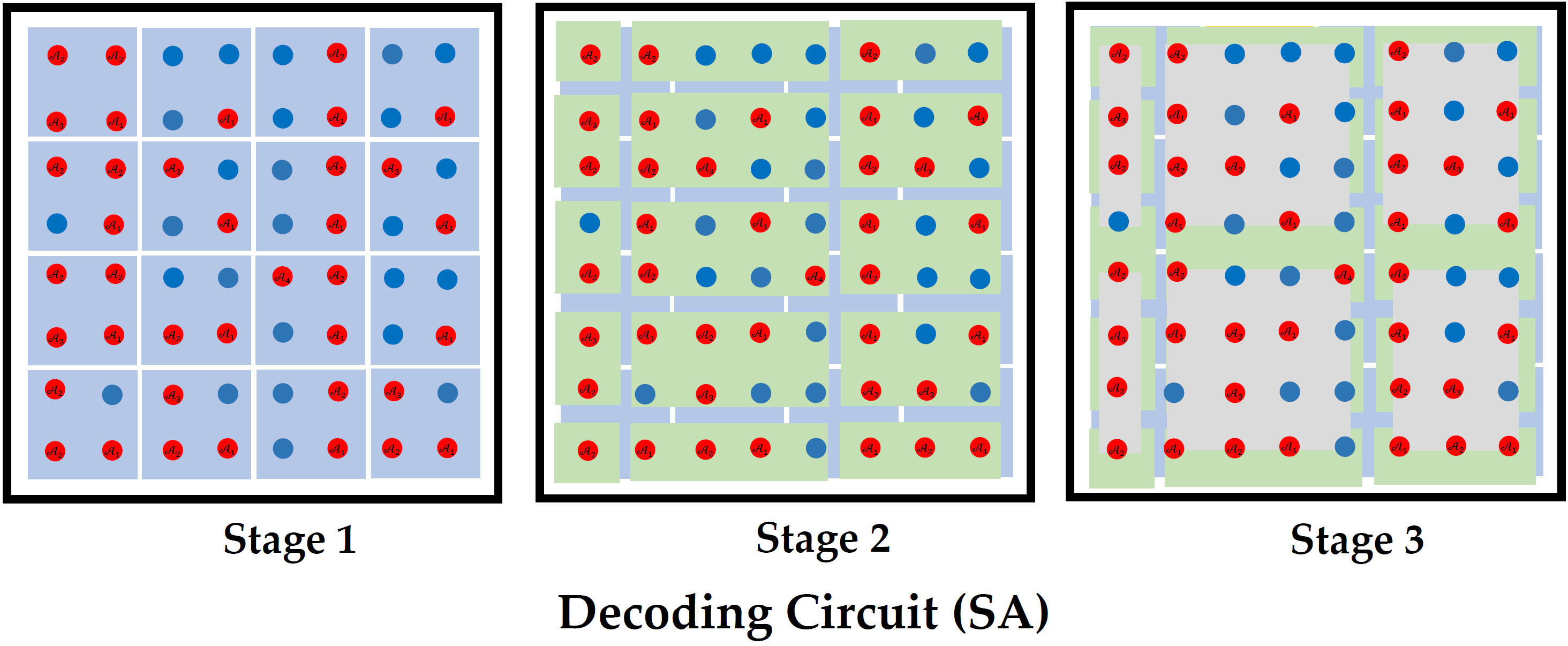}
	\caption{This figure shows first three stages of Shotgun Algorithm (SA), with the number of nodes contained in each sub-circuit increase from $4$ to $8$ and $16$. Note the difference between SA and CA is that the sub-circuits for SA in each stage is chosen uniformly at random instead of by combining previous sub-circuits. }
	\label{fig:sa}
\end{figure}
In a similar way to the Chain Algorithm~\ref{al:one}, the {\it{input vector}} of length $\li$ is first divided into $C_{\textnormal{SA}}k$ groups which are compressed separately. $C_{\textnormal{SA}}$ is a constant for ensuring a desired error probability $\ep$. Each group contains $\li/C_{\textnormal{SA}}k$ entries. The decoding process is performed independently for each group. Thus, the corresponding decoders only need to decode these groups locally, \textit{i.e.}, it only needs to communicate within the local sub-circuits for the corresponding groups. The number of measurements for each group equals the \textit{measurement constant} $c'$. As a result, a constant proportion  $\sigma$ of the total $k$ non-zero entries ({\em i.e.,} a total of $\sigma k$ entries) can be located and solved within $\precision$-bits of precision with a high probability. 

Define $\varphi=\lceil 1/(1-\sigma)\rceil$ as a parameter for the remaining stages.


\subsubsection{Second Stage up to $\log_\varphi(k/\log_2{k})$-th Stage}

In the $i$-th stage we combine $\varphi=\lceil 1/(1-\sigma)\rceil$ of the groups coming from the $\left(i-1\right)$-th stage together by choosing them \textit{uniformly} at random. Note that the combination is performed independently for each stage, and in the $i$-th stage the area spanned by each group is of order approximately $\varphi^{i-1}n/C_{\textnormal{SA}}k$. Thus, in the $i$-th stage, $C_{\textnormal{SA}}k/\varphi^{i-1}$ new groups are formed. Each new group contains $\varphi^{i-1}\li/C_{\textnormal{SA}}k$ entries of the {\it{input vector}}. The decoding algorithm for each group is the same as that of the first stage of the Chain Algorithm (CA) of section~\ref{ad:ca}. Like the first stage of CA, the corresponding decoders need to handle the information locally. The algorithm continues up to the $\log_\varphi(k/\log_2{k})$-th stage. As a result, at the end of $\log_\varphi(k/\log_2{k})$-th stage approximately $\log k$ unsolved non-zero entries remain with a high probability which is of order $1-\sqrt{1/k}$.

\subsubsection{$\left(\log_\varphi(k/\log_2{k})+1\right)$-th Stage (Clearing Stage)}

After $\log_\varphi(k/\log_2{k})$ stages, the algorithm stops combination, and globally decodes the remaining unsolved non-zeros in the {\it{input vector}} $\ivt$ given the information from the previous stages~\ref{sec:fs} and~\ref{sec:sc}. Similar to the last stage of the Chain Algorithm, as all the information is potentially communicated across the entire decoding circuit, the error probability is decreased. In fact, overall, the algorithm achieves an average block error probability $\ep$ of order $\sqrt{1/k}$ while consuming {\it{bit-meters}} of order $\left(\sqrt{\frac{\li k}{\log{n}}}\sqrt{\log{\frac{1}{\ep}}}\right)$.

To summarize, the first Chain Algorithm combines local sub-circuits sequentially and accumulates the information together to resolve the input vector $\ivt$. While the performance of this algorithm is better than the Shotgun Algorithm (SA), a drawback of the Chain Algorithm is that as $\li$ increases, the computation required from the central nodes within each local sub-circuit also increases. In contrast, for SA, except for the clearing stage, every node has the same functionality and the decoder merely needs to decode a possible single non-zero entry in each local sub-circuit. The performance of CA and SA are stated in Theorem~\ref{thm:bitmeter}. Note that it matches the lower bound in Corollary~\ref{corollary} when $\lo=\Theta\left(k\right)$.

\begin{figure*}
	\centering
	\includegraphics[scale=0.5]{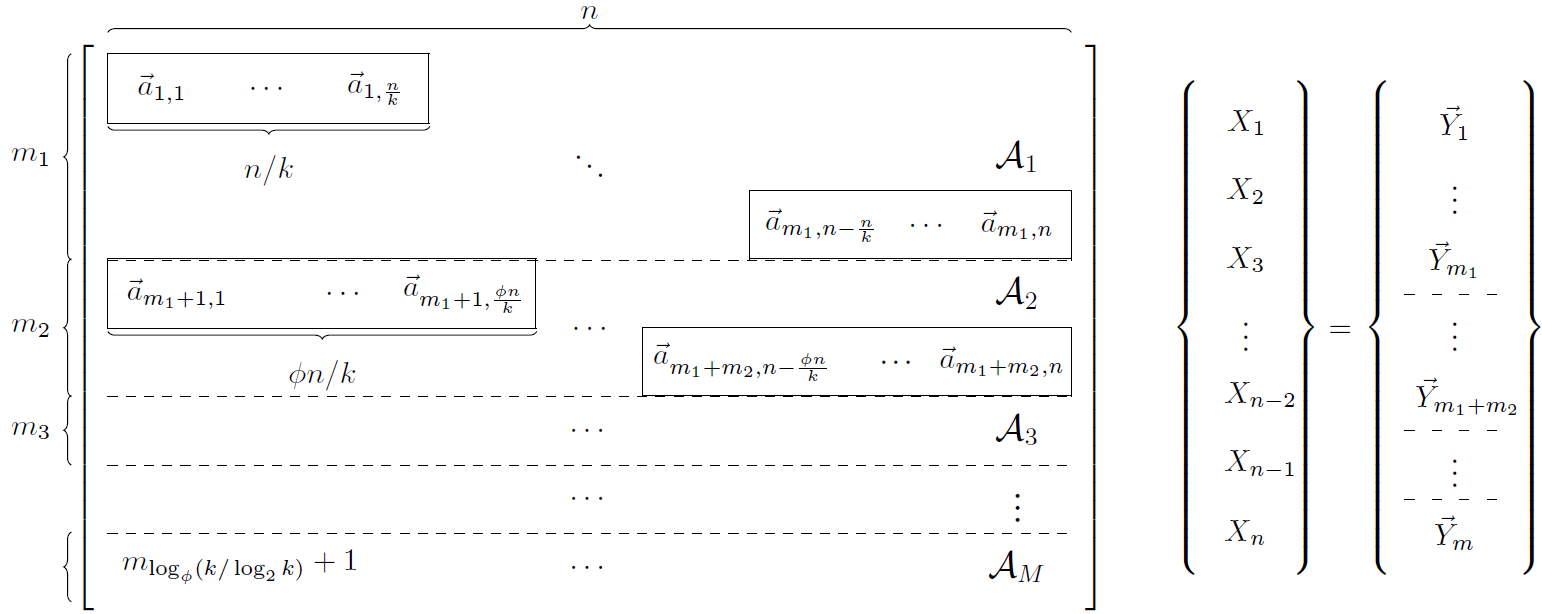}
	\caption{This figure shows in details the construction of \textit{encoding matrix} $\mathcal{A}_{CA}$ where $\phi>1$ is a constant defined at section~\ref{al:one} and the \textit{number of stages} is $M=\lo_{\log_\phi(k/\log_2{k})}+1$. }
	\label{matrix}
\end{figure*}


\subsection{\em Choice of Encoding Matrices}
\label{mc}


For our Chain Algorithm (CA) introduced in Section~\ref{al:one}, the encoding matrix $\mathcal{A}_{\textnormal{CA}}$ is constructed as shown in Figure~\ref{matrix}. Let $\lo_i$ denote the total {\em number of measurements} for the $i$-th stage.

One possible way to construct the entires of the {\em encoding matrix} is by choosing $c=2$ and setting $\vec{a}_{i,j}=0$ if $j$-th item is not inside $i$-th sub-circuit, otherwise $\vec{a}_{i,j}=[e^{\iota\Theta_{i,j}^{I}},e^{\iota\Theta_{i,j}^{V}}]^{T}$ with $\Theta_{i,j}^{I}=\pi ij/\left(2{\li}^2\right)$ and  $\Theta_{i,j}^{V}$ chosen uniformly at random from $[0,\pi/2]$ where $\iota$ denotes the positive square root of $-1$. Therefore, the total number of measurements $m$ follows $m=c\sum_{i=1}^{M}{\lo_i}$ for some \textit{measurement constant}\footnote{The \textit{measurement constant} $c$ can be toned to achieve a desired block error probability $\ep$ by using some additional measurements to verify the linear equations.}. 

For our Shotgun Algorithm (SA) introduced in section~\ref{al:two}, the {\em encoding matrix} $\mathcal{A}_{\textnormal{SA}}$ is generated in a similar manner as above. The only difference between $\mathcal{A}_{\textnormal{CA}}$ and $\mathcal{A}_{\textnormal{SA}}$ is that the sub-matrices $\mathcal{A}_i$ ($i=1,2,\ldots,M$) are no longer related like $\mathcal{A}_{\textnormal{CA}}$ in Figure~\ref{matrix} since the covering sub-circuits are chosen randomly.


\subsection{\em Decoding Steps}

For the Chain Algorithm (CA), the decoding circuit $DecCkt$ stores $m$ received output entries $\vec{Y}_1,\vec{Y}_2,\ldots,\vec{Y}_m$ in $m$ \textit{input-nodes}. Suppose $C_{\textnormal{CA}}=1$, the decoding starts from $\vec{Y}_1$ by checking each group (starting with the group $\{X_1,X_2,\ldots,X_{\li/k}\}$ to determine if the group contains at most one non-zero entry. If so, the \textit{recovery vector} is updated, otherwise, the involved {\em input-nodes} transmit the corresponding entries of \textit{output vector} to the \textit{input-nodes} of a larger sub-circuit containing the current one.  This continues until there is a feasible solution for solving the resulting linear equations. The entire process is written formally as Algorithm~\ref{Algorithm:ca} in the Appendix~\ref{appendix:da}.

%

On the other hand for Shotgun Algorithm (SA), the \textit{input-nodes} need not to pass information to subsequent stages. The decoding circuit $DecCkt$ stores $m$ received output entries $Y_1,Y_2,\ldots,Y_m$ in $m$ \textit{input-nodes}. Similarly to CA, we suppose $C_{\textnormal{SA}}=1$ and the first decoding step starts with $Y_1$ by checking if the group $\{X_1,X_2,\ldots,X_{\li/k}\}$ contains at most one non-zero entry. After that, in the later stages, the size of the sub-circuits increases by a constant $\varphi$ for each stage and \textit{input-nodes} check if the resulting group contains at most one non-zero entry. The entire algorithm is written formally as Algorithm~\ref{Algorithm:sa} shown in Appendix~\ref{appendix:da}. The analysis of CA and SA are provided in Appendix~\ref{appendix:up} and follow the analysis from~\citep{bakshi2012sho}.

\appendices 
\appendices
\section {Proofs of Lower Bound}
\label{appendix:lb}

\begin{figure*}
	\centering
	\includegraphics[scale=0.45]{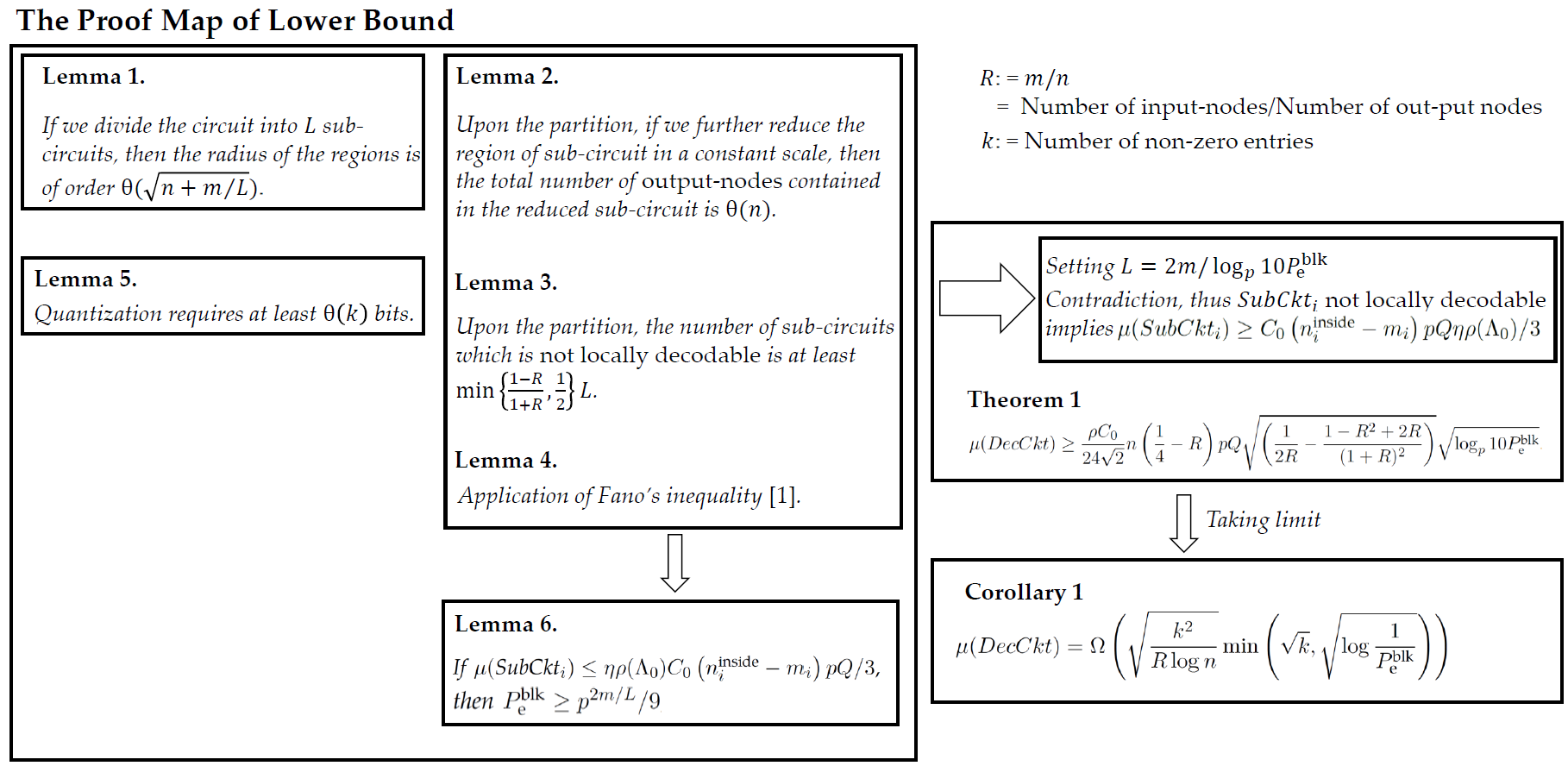}
	\caption{The Proof Map of Lower Bound.}
	\label{fig:pm_ld}
\end{figure*}


Let the {\em packing density} of a lattice $\Lambda$ that spans $\real^2$ be $\sigma(\Lambda)=\frac{\pi {\rho\left(\Lambda\right)}^2}{\det(\Lambda)}$ with $\det(\Lambda)$ denoting the volume of {\em fundamental parallelepiped} of $\Lambda$. Base on Definition~\ref{def:sp}, we derive the following lemma stating the relationship between the {\em paking radius} of $\Lambda$ and $\Lambda_0$.

\begin{lemma}
	\label{lemma_3}
	Let $L$ denote the number of sub-circuits by {\textnormal{Stencil-partition}}. Then the {\textnormal{packing radius}} $\rho(\Lambda_{0})$ of the outer part of sub-circuits is given by
	\begin{align*}
	\rho(\Lambda_{0})=\sqrt{\frac{\sigma{(\Lambda_{0})}(n+m)}{\sigma{(\Lambda)}L}}\rho(\Lambda).
	\end{align*}
\end{lemma}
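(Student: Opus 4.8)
The plan is to reduce the whole statement to a single identity between the fundamental-cell volumes of $\Lambda$ and $\Lambda_{0}$, and then to convert cell volumes into packing radii by invoking the definition of packing density twice. First I would record the two elementary rearrangements of the definition $\sigma(\Lambda)=\pi\rho(\Lambda)^{2}/\det(\Lambda)$ (and the analogous one for $\Lambda_{0}$), namely
\[
\det(\Lambda)=\frac{\pi\rho(\Lambda)^{2}}{\sigma(\Lambda)},\qquad
\det(\Lambda_{0})=\frac{\pi\rho(\Lambda_{0})^{2}}{\sigma(\Lambda_{0})}.
\]
Once these are in hand, the entire lemma follows as soon as I establish the volume ratio
\[
\det(\Lambda_{0})=\frac{\li+\lo}{L}\,\det(\Lambda).
\]

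To prove this ratio I would use the counting structure of the \textnormal{Stencil-partition} of Definition~\ref{def:sp}. By the definition of the sublattice index one has $\det(\Lambda_{0})=\lambda\,\det(\Lambda)$ with $\lambda=|\Lambda/\Lambda_{0}|$; geometrically this says that each fundamental cell of $\Lambda_{0}$ — that is, each \emph{outer part} of a sub-circuit induced by a coset $\textbf{u}+\Lambda_{0}$ — carries exactly $\lambda$ points of $\Lambda$. Since the outer parts tile the substrate and partition the $\li+\lo$ computational nodes among the $L$ sub-circuits, counting nodes in two ways gives $\li+\lo=\lambda L$, hence $\lambda=(\li+\lo)/L$. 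Substituting this into $\det(\Lambda_{0})=\lambda\,\det(\Lambda)$ yields the displayed volume ratio.

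Finally I would feed the two cell-volume expressions into the ratio, obtaining
\[
\frac{\pi\rho(\Lambda_{0})^{2}}{\sigma(\Lambda_{0})}
=\frac{\li+\lo}{L}\cdot\frac{\pi\rho(\Lambda)^{2}}{\sigma(\Lambda)},
\]
then cancel the common factor $\pi$, solve for $\rho(\Lambda_{0})^{2}$, and take the positive square root to recover exactly the claimed formula $\rho(\Lambda_{0})=\sqrt{\sigma(\Lambda_{0})(\li+\lo)/(\sigma(\Lambda)L)}\,\rho(\Lambda)$.

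The step I expect to be the real obstacle is the counting identity $\li+\lo=\lambda L$: it is exact only for an infinite tiling, whereas on the compact substrate $V$ the sub-circuits that straddle the boundary $\partial V$ contain fewer than $\lambda$ nodes. The work is therefore to argue that these boundary cells are asymptotically negligible, i.e.\ that the number of partially covered cells is of lower order than $L$ as $\li\to\infty$, so that $\li+\lo=\lambda L\,(1+o(1))$ and the identity holds in the asymptotic regime in which the lemma is applied. One can alternatively sidestep the limiting argument entirely by choosing $V$ to be an exact union of fundamental cells of $\Lambda_{0}$ and defining $L$ as that number of cells, in which case $\li+\lo=\lambda L$ holds on the nose.
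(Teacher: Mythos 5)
Your proof is correct and follows essentially the same route as the paper's: both convert packing radii to fundamental-cell volumes via the definition $\sigma(\cdot)=\pi\rho^2(\cdot)/\det(\cdot)$, use the index identity $|\Lambda/\Lambda_{0}|=\det(\Lambda_{0})/\det(\Lambda)$, and close with the node-counting relation $n+m=|\Lambda/\Lambda_{0}|\,L$. The paper simply asserts $L=(n+m)/|\Lambda/\Lambda_{0}|$ without comment, so the boundary-cell caveat you raise (and resolve) is extra care on top of the same argument, not a departure from it.
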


\begin{proof}
	By Definition~\ref{IM} of the Implementation Model $(\rho(\Lambda),\mu)$, the lattice $\Lambda$ has a packing radius $\rho(\Lambda)>0$, and since it is a 2-D lattice, the \textit{packing density} $\sigma(\Lambda)$ is given by \[\sigma(\Lambda)=\frac{\pi\rho^{2}(\Lambda)}{\det(\Lambda)}.\]
	
	Similarly for the sub-lattice $\Lambda_{0}$, we also have a positive \textit{packing density} $\sigma(\Lambda_{0})>0$ such that $\sigma(\Lambda_{0})=\pi\rho^{2}(\Lambda_{0})/{\det(\Lambda_{0})}$. Moreover, the cardinality of the quotient $\Lambda/\Lambda_{0}$ equals to $\det(\Lambda_{0})/\det(\Lambda)$. Since $L=\frac{n+m}{|\Lambda/\Lambda_{0}|}$, we conclude that $\rho(\Lambda_{0})=\sqrt{\sigma{(\Lambda_{0})(n+m)}/{\sigma{(\Lambda)}L}}\rho(\Lambda)$.
	
\end{proof}

\begin{lemma}
	\label{lemma_1}
	Consider the {\textnormal{Implementation Model $(\rho,\mu)$}}. For any {\em fractional parameter} $\eta>0$, there exists a point $\textbf{u}\in\Lambda$ for {\textnormal{Stencil $(\lambda,\eta,\textbf{u})$}} such that the number of {\em output-nodes} covered by the Stencil is bounded from below by
	\begin{align}\label{sum_of_innodes}
	\sum_{i}^{L}{\li_i^{\textnormal{inside}}} &\geq \li\left(1-2\eta\right)^2.
	\end{align}
\end{lemma}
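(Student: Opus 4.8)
The plan is to prove this by a first-moment (averaging) argument over the stencil offset $\mathbf{u}$, exploiting the fact that the \emph{inner parts} occupy a fixed fraction $(1-2\eta)^2$ of the plane no matter where the stencil is placed. The outer parts are exactly the translates of a single fundamental parallelepiped $P$ of $\Lambda_0$, so they tile $\real^2$; and each inner part is the centrally-scaled copy $P_\eta$ obtained by shrinking $P$ by linear factor $(1-2\eta)$, i.e.\ by deleting a boundary strip of relative thickness $\eta$ from each of the two pairs of opposite faces. This gives the area identity $|P_\eta|/|P| = (1-2\eta)^2$, which is the geometric engine of the whole argument (and which also pins down the precise meaning of ``scaling by the fractional parameter $\eta$'').

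First I would randomize the offset: draw the translate $\mathbf{u}$ uniformly from a fundamental domain of $\Lambda_0$. For an output-node located at a fixed point $\mathbf{z}$, translation invariance of the tiling means that the position of $\mathbf{z}$ relative to its containing cell is uniform on $P$. Hence the probability that $\mathbf{z}$ lands in some inner part equals the area ratio, $\Pr[\,\mathbf{z}\in\bigcup_i \textnormal{inner}_i\,] = |P_\eta|/|P| = (1-2\eta)^2$. This per-node statement is the crux; everything else is bookkeeping.

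Next, by linearity of expectation summed over all $\li$ output-nodes, $\mathbb{E}_{\mathbf{u}}\big[\sum_i^L \li_i^{\textnormal{inside}}\big] = \sum_{\textnormal{nodes }\mathbf{z}} \Pr[\mathbf{z}\in\textnormal{inner}] = \li(1-2\eta)^2$. Since a random quantity cannot lie strictly below its own mean for every outcome, some offset $\mathbf{u}$ must attain $\sum_i^L \li_i^{\textnormal{inside}} \geq \li(1-2\eta)^2$, which is precisely \eqref{sum_of_innodes}. This ``$\max \geq \textnormal{average}$'' step is what upgrades the expectation into the existential conclusion demanded by the lemma.

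The step I expect to be the main obstacle is reconciling the continuous averaging with the requirement $\mathbf{u}\in\Lambda$. The clean factor $(1-2\eta)^2$ arises from a \emph{continuous} uniform offset, whereas restricting $\mathbf{u}$ to the discrete coset representatives of $\Lambda/\Lambda_0$ replaces the exact area ratio by a lattice point-count ratio, introducing a boundary/discretization error. I would resolve this either by noting that $\sum_i^L \li_i^{\textnormal{inside}}$ is piecewise constant in the continuous offset and arguing that a lattice-admissible offset realizes the averaged value, or by absorbing the discretization error into the asymptotic regime (both $\li$ and the cell dimensions grow, so the boundary correction is lower-order). Getting this bookkeeping exactly right, rather than the averaging idea itself, is the delicate part of the argument.
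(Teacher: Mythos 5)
Your proposal is correct and takes essentially the same approach as the paper: the paper's entire proof is the one-line averaging observation that $\li(1-2\eta)^2$ is the expected number of output-nodes covered when the offset $\mathbf{u}$ is uniformly distributed, so some $\mathbf{u}$ attains at least the mean. Your extra care about reconciling the continuous area ratio with the lattice-restricted offset $\mathbf{u}\in\Lambda$ is a genuine refinement of a discretization subtlety the paper glosses over entirely, but it does not change the route.
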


\begin{proof}
	Note that $\li\left(1-2\eta\right)^2$ is the expected number of \textit{output-nodes} covered by the Stencil, if the point $\textbf{u}\in \Lambda$ is uniformly distributed. Thus there exists at least one point $\textbf{u}$ that satisfies the bound in~\eqref{sum_of_innodes}.
\end{proof}

\begin{lemma}
	\label{lemma_2}
	Consider the {\textnormal{Implementation Model $(\rho,\mu)$}}. Let $L$ be the number of sub-circuits. For any {\em fractional parameter} $\eta>0$ and any choice of $\textbf{u}\in\Lambda$ for {\textnormal{Stencil $(\lambda,\eta,\textbf{u})$}}, the number of sub-circuits satisfying $m_i\leq\min\{2m/{L},n_i\}$ is larger or equal to $\min\{\left(1-R\right)/\left(1+R\right),1/2\}L$ where $R$ is the {\textnormal{rate}} of compressive sensing defined by $R=m/n$.
\end{lemma}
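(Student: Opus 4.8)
The plan is to exploit the regularity of the stencil partition so that both defining inequalities become one‑sided thresholds on the single quantity $m_i$. The key structural fact is that each sub-circuit is the grid restricted to one fundamental cell of the coarse sub-lattice $\Lambda_0$, and every such cell contains the same number of nodes, namely $|\Lambda/\Lambda_0| = (n+m)/L$ (this is exactly the count underlying Lemma~\ref{lemma_3}). Hence for every $i$ I may write $m_i + n_i = (n+m)/L$, which turns the condition $m_i \le n_i$ into $m_i \le (n+m)/(2L)$, while the other requirement is simply $m_i \le 2m/L$. A sub-circuit is therefore counted precisely when $m_i$ lies below the smaller of the two thresholds $t_1 = 2m/L$ and $t_2 = (n+m)/(2L)$.

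First I would dispatch each threshold by an averaging (Markov) count. Since $\sum_i m_i = m$ over $L$ sub-circuits, at most $L/2$ of them can satisfy $m_i > 2m/L$, for otherwise the input nodes alone would exceed $m$; hence at least $L/2$ sub-circuits satisfy $m_i \le t_1$. Applying the same counting argument to the threshold $t_2$, the number of sub-circuits with $m_i > (n+m)/(2L)$ is strictly less than $m\big/\frac{n+m}{2L} = \frac{2m}{n+m}L = \frac{2R}{1+R}L$, so at least $\big(1-\frac{2R}{1+R}\big)L = \frac{1-R}{1+R}L$ sub-circuits satisfy $m_i \le n_i$.

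Next I would compare the two thresholds to eliminate the minimum, splitting into two regimes according to the sign of $t_1 - t_2$. Since $2m/L \le (n+m)/(2L) \iff 3m \le n \iff R \le 1/3$, when $R \le 1/3$ the input threshold is the tighter one, so $\{m_i \le t_1\} \subseteq \{m_i \le t_2\}$ and the quantity I want to bound equals $\#\{m_i \le t_1\} \ge L/2 = \min\{\tfrac{1-R}{1+R},\tfrac12\}L$. When $R > 1/3$ the inclusion reverses, $\{m_i \le t_2\} \subseteq \{m_i \le t_1\}$, so the count equals $\#\{m_i \le n_i\} \ge \frac{1-R}{1+R}L = \min\{\tfrac{1-R}{1+R},\tfrac12\}L$. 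In both regimes the claimed bound follows.

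The step I expect to be the real obstacle is the structural identity $m_i + n_i = (n+m)/L$: it is exact only when the grid is fully packed and each shifted fundamental cell captures precisely $|\Lambda/\Lambda_0|$ lattice points, whereas in general the boundary cells meeting $\partial V$ and any partially occupied cells carry fewer nodes, and the integer rounding of $t_1$ and $t_2$ must be tracked. I would address this by arguing that for a lower bound it is without loss of generality to fill the grid (adding nodes only shortens communication distances and cannot increase $\mu(DecCkt)$), by absorbing the $O(\sqrt{n+m})$ boundary cells into the constant, and by replacing $t_1,t_2$ with their floors so that the two counting inequalities above remain valid; this is also where the ``for any choice of $\mathbf{u}$'' clause is used, since shifting the tiling only moves the boundary while preserving the per-cell node count up to these lower-order corrections.
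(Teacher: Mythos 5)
Your proof is correct and is essentially the paper's own argument: the paper likewise uses the per-cell identity $m_i+n_i=(n+m)/L$ to collapse both conditions into the single threshold $\alpha=\min\{2m/L,\,(n+m)/(2L)\}$, splits into the cases $m<n/3$ (i.e., $R<1/3$) and $m\ge n/3$, and applies the same Markov-type counting to conclude the fraction bound $\min\{(1-R)/(1+R),\,1/2\}$. The only difference is presentational: you explicitly flag and attempt to patch the boundary/partial-cell caveat behind the identity $m_i+n_i=(n+m)/L$, which the paper asserts without comment.
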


\begin{proof}
	Assume $n>m$, first we choose the point $\textbf{u}\in \Lambda$ of the Stencil such that the location of {\textnormal{output-nodes}} satisfies~\eqref{sum_of_innodes}. Then, for this fixed choice of $\textbf{u}\in \Lambda$, we consider the worst location of the {\textnormal{input-nodes}} which minimizes the fraction of {\textnormal{input-nodes}} satisfying $m_i\leq\min\left(2m/{L},n_i\right)$. We call a sub-circuit \textit{non-locally decodable} if $m_i\leq\min\left(2m/{L},n_i\right)$ and \textit{locally decodable} otherwise\footnote{Actually $m_i\leq\min\left(2m/{L},n_i\right)$ is merely a sufficient condition for a sub-circuit to be non-locally decodable, however, we will use the term \lq\lq{}non-locally decodable\rq\rq{} to imply that the sub-circuit satisfies $m_i\leq\min\left(2m/{L},n_i\right)$.}. Figure \ref{fig:lemma} gives an example of \textit{locally decodable} and \textit{non-locally decodable} sub-circuits. First, we note that the fraction of sub-circuits satisfying $m_i\leq 2m/L$ is $1/2$. Similarly, the fraction of sub-circuits satisfying $m_i\leq n_i$ is at least $\left(1-R\right)/\left(1+R\right)$. Thus, the fraction of sub-circuits satisfying $m_i\leq\min\left(2m/{L},n_i\right)$ is at least $\min\{\left(1-R\right)/\left(1+R\right),1/2\}$. Below we prove the claim explicitly.
	
	Note that the number of nodes in each sub-circuit is $\li_i+\lo_i=\frac{\li+\lo}{L}$. Let $\alpha=\min\{\frac{2\lo}{L},\frac{\li+\lo}{2L}\}$. Since $\lo_i\leq \alpha$ for each {\em non-locally decodable} sub-circuit, it satisfies $\frac{\lo_i}{\lo_i+\li_i}\leq \frac{2\lo}{\lo+\li}$ and $\frac{\lo_i}{\lo_i+\li_i}\leq \frac{1}{2}$. 
	
	Now we consider two cases independent with the sub-circuits:
	\begin{enumerate}
		\item $\lo<\frac{\li}{3}$ $\Rightarrow$ A $SubCkt$ is {\em non-locally decodable} if $\frac{\lo_i}{\lo_i+\li_i}\leq \frac{2\lo}{\lo+\li}$;
		\item $\lo\geq \frac{\li}{3}$ $\Rightarrow$ A $SubCkt$ is {\em non-locally decodable} if $\frac{\lo_i}{\lo_i+\li_i}\leq \frac{1}{2}$.
	\end{enumerate}
	
	Hence the fraction $f_{\textnormal{NLD}}$ of {\em non-locally decodable} sub-circuits satisfies $f_{\textnormal{NLD}}\geq \min(1/2,(1-R)/(1+R))$.
\end{proof}

Next we state a lemma derived from Fano's inequality~\citep{cover2012elements}.

\begin{lemma}
	\label{lemma_4}
	If at most $\ent\left(\ivt\right)/3$ bits of information are available to obtain an estimate $\rvt$ of a variable $\ivt$ with entropy $\ent\left(\ivt\right)$, then $\Pr\left[\rvt \neq\ivt\right]\geq 1/9$.
\end{lemma}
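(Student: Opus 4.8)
The plan is to obtain the bound as a direct corollary of Fano's inequality~\citep{cover2012elements}, by treating the phrase ``available information'' as a bound on the mutual information between $\ivt$ and its estimate $\rvt$, and then combining it with the lower bound on residual uncertainty that this forces.

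First I would formalize ``at most $\ent(\ivt)/3$ bits of information are available.'' Let $W$ denote the message (the collection of communicated bits) from which the estimator forms $\rvt$, so that $\ent(W)\le \ent(\ivt)/3$. Since $\rvt$ is a (possibly randomized) function of $W$, the data-processing inequality gives $\mif(\ivt;\rvt)\le \mif(\ivt;W)\le \ent(W)\le \ent(\ivt)/3$. Hence the residual uncertainty after observing the estimate is large:
\begin{equation*}
\ent(\ivt\mid\rvt)=\ent(\ivt)-\mif(\ivt;\rvt)\ge \tfrac{2}{3}\,\ent(\ivt).
\end{equation*}

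Next I would apply Fano's inequality in the form $\ent(\ivt\mid\rvt)\le h(P_{e})+P_{e}\log(|\mathcal{X}|-1)$, where $P_{e}=\Pr[\rvt\neq\ivt]$, $h(\cdot)$ is the binary entropy function, and $\mathcal{X}$ is the support of $\ivt$. In the regime of interest $\ivt$ is essentially uniform over its support (the unknown positions and bounded values of the nonzero entries inside a sub-circuit), so $\log|\mathcal{X}|$ and $\ent(\ivt)$ agree up to lower-order terms; substituting $\log(|\mathcal{X}|-1)\le \ent(\ivt)$ and $h(P_{e})\le 1$ into Fano and chaining with the previous display yields $\tfrac{2}{3}\ent(\ivt)\le 1+P_{e}\,\ent(\ivt)$. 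Rearranging gives $P_{e}\ge \tfrac{2}{3}-1/\ent(\ivt)$, and since $\ent(\ivt)$ grows without bound in our setting (each relevant sub-circuit carries $\omega(1)$ bits of positional and value information), this exceeds the stated constant $1/9$ for all sufficiently large $\ent(\ivt)$; the value $1/9$ is a deliberately loose, robust choice rather than the sharp constant.

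The step I expect to be delicate is the passage from the abstract notion of ``bits of information available'' to the clean mutual-information bound $\mif(\ivt;\rvt)\le\ent(\ivt)/3$, together with relating the alphabet term $\log|\mathcal{X}|$ appearing in Fano back to $\ent(\ivt)$. The latter is where the near-uniformity of $\ivt$ over its support is essential: for a highly concentrated $\ivt$ a constant estimator using no communication could already succeed with high probability, so the lemma genuinely relies on the (near-)uniform structure of the unknown nonzero pattern in the application rather than on the entropy value alone. Once these two points are pinned down, the remainder is the one-line rearrangement above.
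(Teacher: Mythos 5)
Your proposal follows the same overall route as the paper's proof---convert the ``available bits'' into the bound $\mif(\ivt;\rvt)\leq \ent(\ivt)/3$ (the paper does this through the Markov chain $\ivt\rightarrow\ovt\rightarrow\rvt$, you through a message variable $W$ and data processing, which is the same step), then apply Fano's inequality and rearrange to $P_e\geq 2/3-1/\ent(\ivt)$---so the skeleton matches. The genuine difference is in how the Fano residual term is handled, and there your version is the more careful one. The paper expands $\ent(\ivt|\rvt)$ with an error indicator and bounds the conditional entropy on the error event by $\ent(\ivt)$ itself, obtaining $\ent(\ivt|\rvt)\leq h_b(P_e)+P_e\ent(\ivt)$ with no hypothesis whatsoever on the distribution of $\ivt$; you instead use the textbook form with $\log(|\mathcal{X}|-1)$ and then invoke near-uniformity of $\ivt$ to replace $\log(|\mathcal{X}|-1)$ by $\ent(\ivt)$. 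Your closing caveat identifies exactly why this matters: the paper's intermediate inequality is false for general distributions. If $\ivt$ places mass $1-\epsilon$ on a single point and $\epsilon$ spread uniformly over $N$ others, the constant estimator yields $\ent(\ivt\mid\rvt,\textnormal{error})=\log N$, which for small $\epsilon$ and large $N$ exceeds $\ent(\ivt)\approx h_b(\epsilon)+\epsilon\log N$, and the lemma's conclusion itself fails there ($P_e=\epsilon<1/9$ with zero bits communicated). So the near-uniformity hypothesis you flag is not a weakness of your write-up; it is a condition the lemma genuinely needs and that the paper's proof assumes silently. The remaining differences are cosmetic: the paper closes with $\ent(\ivt)\geq 2$ (justified by $n>m>1$) to get $P_e\geq 2/3-1/2=1/6>1/9$, whereas you let $\ent(\ivt)$ grow without bound; either disposes of the additive $h_b(P_e)\leq 1$ slack.
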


\begin{proof}
	Similar to the proof of Fano's inequality~\citep{cover2012elements}, we define the error random variable $E(\ivt,\rvt)$ as follows:
	\begin{align*}
	E(\ivt,\rvt) = \begin{cases}
	1  \text{ if $\rvt=\ivt$}\\
	0  \text{ if $\ivt\neq\ivt$}.
	\end{cases}
	\end{align*}
	
	Since the input vector $\ivt$, the output vector $\ovt$ and the recovery vector $\rvt$ form a Markov chain $\ivt\rightarrow\ovt\rightarrow\rvt$, we get
	$\ent\left(\ivt\right) = \ent\left(\ivt|\rvt\right)+\mif\left(\ivt;\rvt\right)\leq \ent\left(\ivt|\rvt\right)+\mif\left(\ivt;\ovt\right)\leq\ent\left(\ivt|\rvt\right)+\ent\left(\ovt\right)$.
	
	Thus,
	\begin{align*}
	&\ent\left(\ivt|\rvt\right) \\
	&= \ent\left(E(\ivt,\rvt),\ivt|\rvt\right)\\
	&= \ent\left(E(\ivt,\rvt)|\rvt\right) \\
	&+ \Pr\left[E(\ivt,\rvt)=0\right]\ent\left(\ivt|\rvt,E(\ivt,\rvt)=0\right)\\
	&+ \Pr\left[E(\ivt,\rvt)=1\right]\ent\left(\ivt|\rvt,E(\ivt,\rvt)=1\right)\\
	&\leq h_b\left(P_e\right)+P_e\ent\left(\ivt\right).
	\end{align*}
	
	Given the available information $\mif$ of at most $\ent\left(\ivt\right)/3$ bits, the error probability $P_e:=\Pr\left(\rvt \neq\ivt\right)$ is lower bounded by
	\begin{align*}
	P_e\ent\left(\ivt\right)+h_b\left(P_e\right) 
	&\geq \ent\left(\ivt\right)- \ent\left(\ovt\right)\\
	&\geq \ent\left(\ivt\right)-\ent\left(\ivt\right)/3 \\
	&= 2\ent\left(\ivt\right)/3,
	\end{align*}
	where $h_b\left(\cdot\right)$ on the LHS is the binary entropy function (will also appear in the later parts). 
	
	Then since $n>m>1$ we have 
	\begin{align*}
	P_e &\geq  \frac{2\ent\left(\ivt\right)/3-1}{\ent\left(\ivt\right)} \geq \frac{2}{3}-\frac{1}{2}>\frac{1}{9}.
	\end{align*}
\end{proof}

\begin{lemma}
	\label{lemma_20}
	Consider the {\textnormal{Sparsity Model $(n,m,p)$}}. For every decoding circuit $DecCkt$  on the {\textnormal{Implementation Model $(\rho,\mu)$}}, if the {\textnormal{relative error}} satisfies $||\ivt-\rvt||_{\nr}/||\ivt||_{\nr}\leq 2^{-\precision}$, then there exists a constant $C_0=C_{0}(\ivt,q)<1$ such that asymptotically at least $C_{0}\li p\precision$ bits are required by all the {\textnormal{output-nodes}}.
\end{lemma}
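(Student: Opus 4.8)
The plan is to lower-bound the total number of bits held across the \textnormal{output-nodes} by the logarithm of the number of distinct \textnormal{Recovery Vectors} $\rvt$ that any decoder must be able to emit in order to meet the relative-error target, and then to show this covering number grows like $2^{\Theta(\li p\precision)}$.

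First I would condition on the realized support $T$ of $\ivt$ and write $k=|T|$. By the strong law of large numbers --- or the Chernoff bound cited in the paper's footnote --- one has $k\geq \li p/2$ outside an event of probability $e^{-\Theta(\li p)}$, so it suffices to prove the bound with $k$ in place of $\li p$ and absorb the factor $1/2$ into $C_0$. Under the hypothesis $\|\ivt-\rvt\|_{\nr}/\|\ivt\|_{\nr}\leq 2^{-\precision}$, the recovery vector is forced into the $\nr$-ball of radius $\|\ivt\|_{\nr}2^{-\precision}\leq \up\,k^{1/q}2^{-\precision}=:D$ about $\ivt$. Thus a single admissible value of $\rvt$ can certify the guarantee only for inputs lying in one $\nr$-ball of radius $D$; writing $A\subseteq[-\up,\up]$ for the (positive-length) essential support of the entry distribution, the reconstruction map restricted to the coordinates in $T$ must form a $D$-cover of the $k$-cube $A^{T}$. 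Hence the number $N$ of distinct recovery vectors obeys the volume bound $N\geq \mathrm{Vol}(A^{T})/\mathrm{Vol}(B_{\nr}(0,D))$.

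Next I would evaluate this ratio. With $V_q(k)$ the volume of the unit $\nr$-ball in $\real^{k}$, Stirling's formula gives $V_q(k)^{1/k}\sim C_q/k^{1/q}$ for a constant $C_q$ depending only on $q$, so that $\mathrm{Vol}(B_{\nr}(0,D))^{1/k}=V_q(k)^{1/k}D\sim C_q\,\up\,2^{-\precision}$ --- the two factors $k^{1/q}$ cancel exactly. Since $\mathrm{Vol}(A^{T})^{1/k}=|A|$, taking base-$2$ logarithms gives
\[
\log_2 N\;\geq\; k\Big(\precision+\log_2\tfrac{|A|}{C_q\up}\Big),
\]
whose leading term in $\precision$ is exactly $k\precision$. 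As the \textnormal{output-nodes} collectively store $\rvt$, which takes at least $N$ values, they must hold at least $\log_2 N$ bits; folding the concentration factor $1/2$ and the additive constant $\log_2(|A|/(C_q\up))$ into $C_0=C_0(\ivt,q)<1$ then yields the claimed $C_0\,\li p\,\precision$.

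The hard part will be exactly this cancellation bookkeeping: I must check that the growth $\|\ivt\|_{\nr}\sim \up\,k^{1/q}$ and the decay $V_q(k)^{1/k}\sim C_q/k^{1/q}$ combine to leave a genuine constant, so that $C_0$ depends only on the entry distribution (through $|A|$) and on $q$ (through $C_q$), and not covertly on $k$, $\li$ or $\precision$. A related point is that the admissible inputs must fill a set of positive $k$-dimensional volume --- which is precisely why $C_0$ is permitted to depend on $\ivt$ --- since a degenerate entry distribution would make $N$ trivial and the bound vacuous.
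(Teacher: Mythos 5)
Your covering-number route is genuinely different from the paper's proof, which never counts recovery vectors: the paper argues per coordinate, defining $\precision_i$ as the relative precision actually achieved at entry $i$ (via $2^{-\precision_i-1}\leq |X_i-\hat{X}_i|/|X_i|\leq 2^{-\precision_i+1}$), charging $\precision_i$ bits to the $i$-th output-node, and then converting the block relative-error hypothesis into $\sum_i\precision_i\geq C_0k\precision$ by Jensen's inequality with weights $|X_i|^q/\sum_j|X_j|^q$, yielding $C_0=\sum_i|X_i|^q/(k\up^q)<1$. Your argument, however, has a genuine gap: it needs the non-zero entries to range over a set $A$ of positive Lebesgue measure, and the Sparsity Model $(\li,\lo,p)$ guarantees no such thing --- it only says each entry is a bounded real that is non-zero with probability $p$. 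If the value distribution is discrete (say every non-zero entry equals $\up$), then $\mathrm{Vol}(A^{T})=0$, the covering number $N$ does not grow with $\precision$ at all, and the bound collapses. Your closing remark that this is ``precisely why $C_0$ is permitted to depend on $\ivt$'' does not repair it: the dependence $C_0(\ivt,q)$ concerns the realized magnitudes (as in the paper's $\sum_i|X_i|^q/(k\up^q)$), whereas your failure mode concerns the richness of the \emph{ensemble} of admissible inputs, which no choice of constant can restore. The paper's per-realization, per-coordinate accounting is immune to this precisely because it never needs the input ensemble to fill volume; the price it pays is the modeling convention that storing $\hat{X}_i$ to relative precision $2^{-\precision_i}$ costs $\precision_i$ bits.

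A second, smaller problem is the constant bookkeeping you flag as ``the hard part.'' Since $A\subseteq[-\up,\up]$ gives $|A|\leq 2\up$ while $C_q=2\Gamma(1+1/q)(qe)^{1/q}>2$ for every $q\geq 1$ (e.g.\ $C_2=\sqrt{2\pi e}\approx 4.13$), the additive term $\log_2\bigl(|A|/(C_q\up)\bigr)$ is strictly negative, so what you actually prove is $\log_2 N\geq k(\precision-c)$ for some constant $c>0$. This is vacuous whenever $\precision\leq c$, and ``folding the additive constant into $C_0$'' amounts to taking $C_0=1-c/\precision$, a quantity that depends on $\precision$ --- which the lemma's $C_0=C_0(\ivt,q)$ does not allow, and which matters because the paper's regime of interest is $\precision=\Theta(1)$. (The paper's own Jensen step silently loses a comparable additive ``$-1$,'' so this slack is partly shared, but yours is larger and tied to $C_q$.) Finally, as written your cover presumes the decoder meets the guarantee for \emph{every} input supported on $T$; to mesh with the probabilistic use of this lemma inside Lemma~\ref{lemma_5}, you would need to restrict the covering argument to the decoder's success event and lower-bound the volume of that event.
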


\begin{proof}
	For each $i$, let $\precision_{i}$ denote the number of bits of quantization required to distinguish $\hat{X}_{i}$ and $X_i$ for each entry $\hat{X}_{i}$. Thus, we have $2^{-\precision_{i}-1}\leq\frac{|X_{i}-\hat{X}_{i}|}{|X_{i}|}\leq2^{-\precision_{i}+1}$ for all $i$. Let $||\ivt-\rvt||_{\nr}/||\ivt||_{\nr}\leq 2^{-\precision}$. Hence,
	\begin{align*}
	{\left(\sum_{i=1}^{k}|X_{i}-\hat{X}_{i}|^{q}\right)}^{1/q}\leq 2^{-\precision}{\left(\sum_{i=1}^{k}|X_{i}|^{q}\right)}^{1/q},
	\end{align*}
	which implies that
	\begin{align*}
	{\left(\frac{2^{-q(\precision_{i}+1)}\sum_{i=1}^{k}|X_{i}|^{q}}{\sum_{i=1}^{k}|X_{i}|^{q}}\right)}^{1/q}\leq 2^{-\precision}.
	\end{align*}
	
	By assumption, $0\leq|X_i|\leq \up$ for each $i$ for some constant $\up\geq 0$. By Jensen's inequality (see, for instance in the book~\cite{kuczma2008introduction}), we get 
	\begin{align*}
	|\up|^{q}\sum_{i}^{k}\precision_{i}\geq\sum_{i=1}^{k}|X_{i}|^{q}\precision_{i}\geq \sum_{i=1}^{k}|X_{i}|^{q}\precision.
	\end{align*}
	
	Thus $\sum_{i}^{k}\precision_{i}\geq C_{0}kQ$ with $C_{0}=\sum_{i=1}^{k}|X_i|^{q}/k|\up|^{q}<1$. The asymptotic result follows as $n\rightarrow \infty$.
\end{proof}
\begin{figure*}
	\centering
	\includegraphics[scale=0.43]{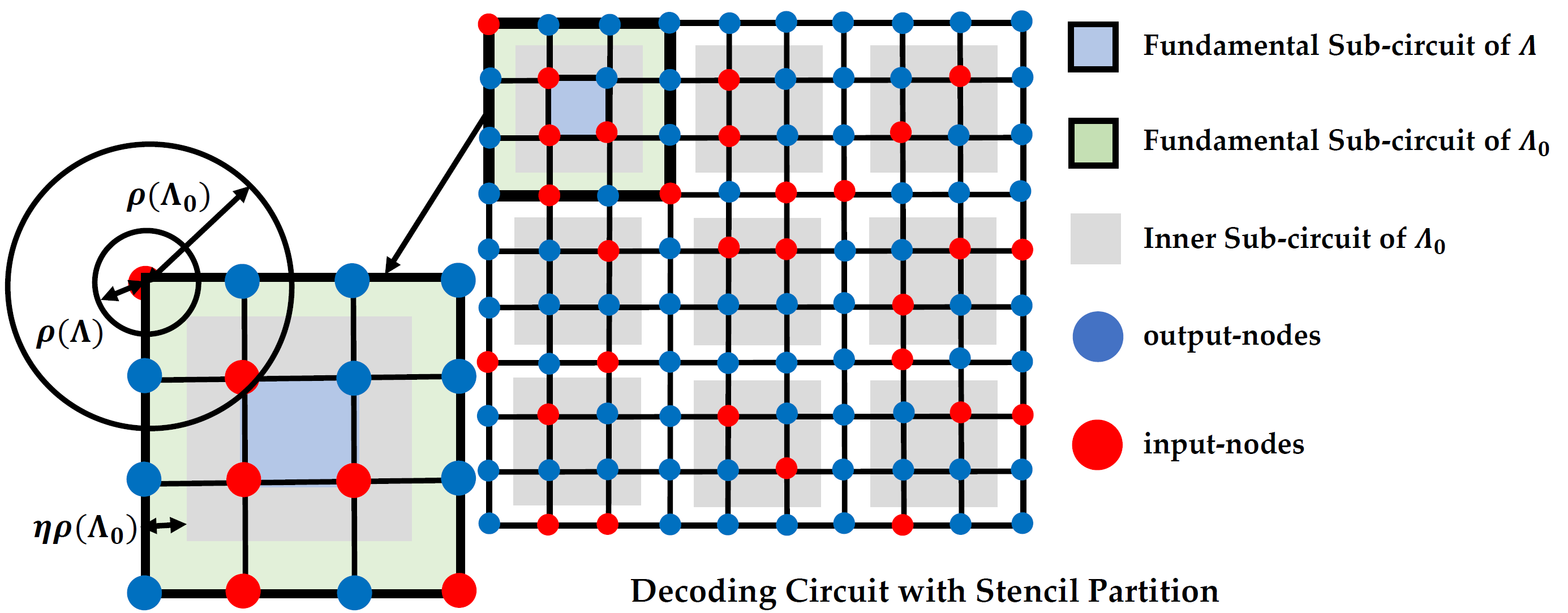}
	\caption{This graph illustrates the \textit{stencil-partition} on the decoding circuit. The sub-lattice which has a larger {\em fundamental parallelepiped} (sub-circuit) defines the sub-circuits. And the {\em inner part} of sub-circuits are fixed by choosing a \textit{fractional parameter} $0<\eta<1$. For instance, for the sub-circuit on the left-up corner contains the {\em fundamental parallelepiped}, the order of quotient $\lambda=|\Lambda/\Lambda_0|=9$, and it has $6$ \textit{input-nodes} and $10$ \textit{output-nodes}. Moreover, it has $1$ \textit{output-node} in the \textit{inner part} of sub-circuit.}
	\label{fig:stencil}
\end{figure*}

Next we combine the lemmas above to give a result connecting {\em bit-meters} and {\em average block error probability} $\ep$. As mentioned before, in this lemma we call the {\em inner part} of a sub-circuit the {\em {inner parallelepipeds}} and the {\em outer part} the \textit{outer parallelepipeds} respectively. 

\begin{lemma}
	\label{lemma_5}
	Consider the {\textnormal{Sparsity Model $(n,m,p)$}}. Let $SubCkt_i$ be a sub-circuits with $m_i\leq\min\left(2m/{L},n_i\right)$ that is obtained via {\em stencil-partitioning} a decoder circuit $DecCkt$ implemented on the {\textnormal{Implementation Model $(\rho,\mu)$}}. If $\mu(\textsl{SubCkt}_i)\leq\eta \rho(\Lambda_{0})C_{0}\left(n_i^{\textnormal{inside}}-m_i\right)p\precision/3$, then $\ep\geq p^{2m/L}/9$, where $L$ is the number of sub-circuits.
\end{lemma}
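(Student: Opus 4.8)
The plan is to convert the hypothesis into a Fano-type impossibility statement for the inner part of $SubCkt_i$, regarded as a self-contained decoder that must reconstruct the input coordinates attached to its $n_i^{\textnormal{inside}}$ inner output-nodes. The resources available to this inner decoder are only (i) the measurements stored in input-nodes lying inside the inner part — at most $m_i$ of them — and (ii) whatever bits are delivered across the boundary into the inner part. Note the hypothesis is vacuous unless $n_i^{\textnormal{inside}}>m_i$ (otherwise the right-hand side of the bit-meters bound is non-positive), so I work in the regime where the inner part carries a genuine surplus of output-nodes over locally available measurements. Since $m_i$ linear measurements can pin down at most $m_i$ coordinates, a residual block of roughly $n_i^{\textnormal{inside}}-m_i$ coordinates remains, whose reconstruction must be purchased entirely by incoming communication. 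I would formalize this by taking $\mathbf{X}_{\textnormal{inside}}$ as the target variable and the local measurements together with the crossing bits as the side information in a Markov chain of the form in Lemma~\ref{lemma_4}.

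The first technical step is geometric: bound the number $B$ of bits crossing into the inner part in terms of $\mu(SubCkt_i)$. By construction of the stencil, the inner part is the $\eta$-scaled copy of the outer parallelepiped, so every link joining an inner node to a node outside the inner part spans Euclidean distance at least $\eta\rho(\Lambda_0)$. Each such link has both endpoints inside $SubCkt_i$, hence is charged to $\mu(SubCkt_i)$; summing over crossing links gives $\mu(SubCkt_i)\geq\eta\rho(\Lambda_0)\,B$, i.e. $B\leq\mu(SubCkt_i)/(\eta\rho(\Lambda_0))$. Substituting the hypothesis $\mu(SubCkt_i)\leq\eta\rho(\Lambda_0)C_0(n_i^{\textnormal{inside}}-m_i)p\precision/3$ yields the clean bound $B\leq C_0(n_i^{\textnormal{inside}}-m_i)p\precision/3$.

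The second step quantifies how much side information the inner decoder actually needs and extracts the prefactor. Applying Lemma~\ref{lemma_20} to the inner part, reconstructing its surplus coordinates to relative precision $2^{-\precision}$ requires at least $C_0(n_i^{\textnormal{inside}}-m_i)p\precision$ bits beyond what the local measurements supply — exactly three times the budget $B$ surviving the geometric step. Lemma~\ref{lemma_4} then forces the inner decoder to err with probability at least $1/9$, and any such error pushes the relative error above $2^{-\precision}$, i.e. $\er=1$. To obtain the factor $p^{2m/L}$, I condition on the event that a fixed set of at most $2m/L$ inner coordinates are simultaneously non-zero; under the Sparsity Model $(n,m,p)$ this event has probability at least $p^{2m/L}$, and since $m_i\leq 2m/L$ it certifies the under-determined regime that makes the residual entropy genuinely of order $(n_i^{\textnormal{inside}}-m_i)p\precision$ and its failure consequential for the block error. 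Multiplying the conditional failure probability $1/9$ by $p^{2m/L}$ gives $\ep\geq p^{2m/L}/9$.

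The hard part is the bookkeeping at the seam between the two steps. I must verify that conditioning on the local measurements removes at most $m_i$ coordinates' worth of uncertainty, so that the residual entropy is indeed at least $C_0(n_i^{\textnormal{inside}}-m_i)p\precision$ (here using $m_i$ as a conservative over-count of the truly local inner input-nodes $m_i^{\textnormal{inside}}\leq m_i$), and I must ensure the single conditioning event is simultaneously likely enough to yield $p^{2m/L}$, strong enough to force under-determination, and strong enough that an inner-coordinate decoding error translates into a genuine block error $\er=1$. A secondary subtlety is the geometric attribution: I must confirm that no information reaches the inner decoder except through links of length at least $\eta\rho(\Lambda_0)$ that are charged to $\mu(SubCkt_i)$, so that direct long-range links from other sub-circuits cannot smuggle side information into the inner part uncounted.
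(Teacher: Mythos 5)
Your proposal matches the paper's proof step for step: the same geometric charge of $\eta\rho(\Lambda_0)$ per crossing bit, Lemma~\ref{lemma_20} for the $C_{0}\left(n_i^{\textnormal{inside}}-m_i\right)p\precision$ entropy requirement, Lemma~\ref{lemma_4} (Fano) for the conditional error of $1/9$, conditioning on an under-determination event of probability at least $p^{2m/L}$ (via $m_i\leq 2m/L$), and the observation that a sub-circuit failure forces $\er=1$ for the whole block. The one slip is your claim that \emph{every} link joining an inner node to a node outside the inner part has length at least $\eta\rho(\Lambda_0)$ --- nodes in the annulus between the inner and outer parallelepipeds can sit arbitrarily close to the inner boundary --- but this is harmless because, exactly as the paper words it (and as your own conservative $m_i$ over-count already anticipates), only bits originating outside the \emph{outer} parallelepiped carry information not accounted for by the $m_i$ local measurements, and any such bit must traverse the full $\eta\rho(\Lambda_0)$ gap.
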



\begin{proof}
	In each $i$-th sub-circuit $SubCkt_i$, if $m_i\leq\min\left(2m/{L},n_i\right)$, then the number of {\it{bit-meters}} for $SubCkt_i$ is smaller than $\eta \rho(\Lambda_{0})C_{0}\left(n_i^{\textnormal{inside}}-m_i\right)p\precision/3$. Further, the distance between the {\it{outer parallelepipeds}} and the {\it{inner parallelepipeds}} is bounded from below by $\eta \rho(\Lambda_{0})$. Therefore at most $C_{0}\left(n_i^{\textnormal{inside}}-m_i\right)p\precision/3$ bits of information $\mif$ can be communicated from outside the \textit{outer parallelepipeds} to the inside of \textit{inner parallelepipeds}.
	
	Now since $m_i\leq n_i$, if the $n_i$ {\it{output-nodes}} correspond to more than $m_i$ non-zero entries in the {\it{input vector}}, then the decoder cannot determine all $\precision$ bits in {\it{output-nodes}}. We denote this failure event by $\mathcal{L}$. Then $\mathcal{L}$ occurs with probability at least $p^{2m/L}$ since $m_i \leq 2m/L$.
	
	Conditioning on the event $\mathcal{L}$, applying Lemmas~\ref{lemma_4} and \ref{lemma_20} using Fano's inequality~\citep{cover2012elements}, as the received entropy is smaller than $C_{0}\left(n_i^{\textnormal{inside}}-m_i\right)p\precision/3$, the average block error probability is larger than $1/9$. Thus, given the assumptions of this lemma, the (unconditional) error probability for recovering the $n_i$ entries of input vector $\ivt$ with precision $\precision$ in the $i$-th sub-circuit is lower bounded by $p^{2m/L}/9$. Since the average block error probability $\ep$ for the entire circuit is larger than that for any sub-circuit, the claimed result follows.
\end{proof}
\vspace{10pt}


\subsection{\em Proof of Theorem~\ref{thm:lower bound}}

	The \textit{outer parallelepipeds} (or we call it sometimes \textit{outer part} of sub-circuit) of the Stencil divide the circuit into $L$ sub-circuits. Let the $i$-th sub-circuit have $m_i$ {\it{input-nodes}} and $n_i$ {\it{output-nodes}} within the {\it{outer parallelepipeds}} and $n_i^{\textnormal{inside}}$ {\it{output-nodes}} inside the {\it{inner parallelepipeds}}. Using Lemma~\ref{lemma_1} and Lemma~\ref{lemma_2} we can choose a fixed origin $O$ of the Stencil such that at least $(1-2\eta)^2$ fraction of the $\li$ {\it{output-nodes}} are covered by the {\it{inner parallelepipeds}}. Moreover, note that the number of sub-circuits covered by the {\it{inner parallelepipeds}} with $m_i\leq\min\{2m/{L},n_i\}$ is at least $\min\{(1-R)/(1+R),1/2\}L$, which will be used in the later part.
	
	Next, setting $L=2m/\log_p{10\ep}$ in Lemma~\ref{lemma_5}, if we assume that the {\it{bit-meters}} used by a {\em non-locally decodable} sub-circuit is smaller than $\eta \rho(\Lambda_{0})C_{0}\left(n_i^{\textnormal{inside}}-m_i\right)p\precision/3$, then the average block error probability $\ep$ is bounded from below as
	\begin{align*}
	\ep \geq p^{\frac{2m}{L}}/9 = p^{\log_p{10\ep}}/9 = 10\ep/9.
	\end{align*}
	
	Since the above is a contradiction, for each {\em non-locally decodable} sub-circuit $SubCkt_i$, denote
	$\mu\left(i\right)=
	\mu(SubCkt_i) \geq C_{0}\left(n_i^{\textnormal{inside}}-m_i\right)p\precision\eta \rho(\Lambda_{0})/3.
	$
	
	We bound the total {\it{bit-meters}} in the decoding circuit by
	\begin{align}
	\IEEEnonumber
	\mu{(DecCkt)} 
	&\geq \sum_{i=1}^{L}\mu(i)\\ 
	\IEEEnonumber
	&\geq \sum_{m_i\leq\min\{2m/{L},n_i\}}\mu(i)+\sum_{m_i\geq 2m/{L}}\mu(i)\\ 
	\IEEEnonumber
	&\geq \sum_{m_i\leq\min\{2m/{L},n_i\}} C_{0}\left(n_i^{\textnormal{inside}}-m_i\right)
	p\precision\eta \rho(\Lambda_{0})/3 +\sum_{m_i\geq 2m/{L}}\mu(i).
	\end{align}
	
	Now we define three types of sub-circuits under the condition $m_i\geq 2m/{L}$ and $m_i\leq\min\{2m/{L},n_i\}$. First we use ${\textnormal{LD1}}$ to denote those values of $i$ such that $m_i\geq 2m/{L}$ and $\mu(i)\geq {C_{0}\left(n_i^{\textnormal{inside}}-m_i\right)p\precision\eta \rho(\Lambda_{0})/3}$. Next let ${\textnormal{LD2}}$ denote those values of $i$ such that $m_i\geq 2m/{L}$ and $\mu(i)< {C_{0}\left(n_i^{\textnormal{inside}}-m_i\right)p\precision\eta \rho(\Lambda_{0})/3}$. Finally, let ${\textnormal{NLD}}$ denote those values of $i$ such that $m_i\leq\min\{2m/{L},n_i\}$, then it follows that 
	\begin{align}
	\IEEEnonumber
	\mu{(DecCkt)} 
	&\geq \sum_{i\in {\textnormal{LD1}}\cup {\textnormal{NLD}}} {C_{0}\left(n_i^{\textnormal{inside}}-m_i\right)p\precision\eta \rho(\Lambda_{0})/3} 
	+\sum_{i\in {\textnormal{LD2}}}\mu(i)\\
	\IEEEnonumber
	&\myeqa  \frac{1-R}{2(1+R)}\sum_{i=1}^{L} C_{0}\left(n_i^{\textnormal{inside}}-m_i\right)p\precision\eta \rho(\Lambda_{0})/3.\\
	\label{bit-meters}
	&\myeqb \frac{1-R}{2(1+R)}C_{0}\left((1-2\eta)^2n-m\right)p\precision\eta \rho(\Lambda_{0})/3.
	\end{align}
	
	In the above, (a) follows from Lemma~\ref{lemma_2} that the fraction of sub-circuits $SubCkt_i$ with $i\in LD_3$ is larger than $\min\{(1-R)/(1+R),1/2\}$ hence $(1-R)/2(1+R)$ and (b) follows from Lemma~\ref{lemma_1} such that $\sum_{i}^{L}{n_i^{\textnormal{inside}}} \geq n\left(1-2\eta\right)^2$. 
	
	Next, by Lemma~\ref{lemma_3}, 
	
	\begin{align*}
	\rho(\Lambda_{0}) &\leq \frac{1}{2}\rho(\Lambda)\sqrt{ \left(n+m\right)/L }
	=\frac{1}{2}\rho(\Lambda)\sqrt{ \log_p{10\ep}\left(1+\frac{1}{2R}\right)}.
	\end{align*}
	
	 Substituting $\rho(\Lambda_{0})$ into~\eqref{bit-meters}, we get
	\begin{align*}
	\mu{(DecCkt)} 
	&\geq \frac{\eta\rho(\Lambda)}{6\sqrt{2}}C_{0}\left((1-2\eta)^2n-m\right)
	p\precision\sqrt{\left(\frac{1}{2R}-\frac{1-R^2+2R}{(1+R)^2}\right)}\sqrt{\log_p{10\ep}}.
	\end{align*}
	
	
	Choosing $\eta=1/4$ yields Theorem~\ref{thm:lower bound}.
\qed
\vspace{10pt}

For the regime $\lo=\Theta(k)$, we derive the following order expression. This serves as a benchmark for design of our algorithms.




\subsection{\em Proof of Corollary~\ref{corollary}}

	In the {\it{Sparsity Model ($\li$,$m$,$p$) }}, the expected number of non-zero entries in the input vector $\ivt$ is $k=np$. By Hoeffding's inequality, we can bound the number of non-zero entries in the input vector $\ivt$ in the {\it{Sparsity Model ($\li$,$m$,$p$) }} in the range $[k/2,3k/2]$ with probability $1-e^{-\Theta(k)}$. Hence {\it{asymptotically}} we can substitute $k=np$ in the inequality~\ref{lower bound} and get
	\begin{align}
    \label{coro}
	\mu{(DecCkt)} 
	&\geq\frac{\rho(\Lambda)}{24\sqrt{2}}C_{0}\left(\frac{n}{4}-m\right)\frac{k\precision}{n}
	\sqrt{\left(\frac{1}{2R}-\frac{1-R^2+2R}{(1+R)^2}\right)}\sqrt{\log_p{10\ep}}
	\end{align}
	which differs from the original lower bound in the inequality~\ref{lower bound} by a constant.
	
	
	Since $k=o(n)$ by our {\it{sparse assumption}}, in the regime $m=\Theta(k)$, we get $R=m/n=o(1)$. Finally, letting $R=m/n$ we can asymptotically bound the {\it{bit-meters}} as
	\begin{align*}
	&\mu{(DecCkt)} = \Omega\left( \sqrt{\frac{\li k^2}{\log \li}}\min\left(\sqrt{\frac{k}{m}},\sqrt{\frac{\log{\frac{1}{\ep}}}{m}}\right)\right).
	\end{align*}
\qed

\section{Decoding Algorithms} 
\label{appendix:da}
We give the following algorithms descriptions for CA and SA.

\begin{algorithm}[H]
	\caption{Decoding Algorithm (CA)}
	\label{Algorithm:ca}
	\footnotesize
	\begin{algorithmic}[1]
		\Procedure{Dec}{$\ovt,\mathcal{A}_{\textnormal{CA}}$}
		\For{$i \leftarrow 1,\lo $}
		\State load $c'$ rows of \textit{encoding matrix} $\mathcal{A}_{\textnormal{CA}}^{i}$ and $\vec{Y}_i$ at the $i$-th \textit{input-node}
		\State load $\mathcal{S}_i=\{\vec{Y}_t\}_{t\in\{1,2,\ldots,i-1\}}$ received from previous \textit{input-nodes} and their corresponding rows of \textit{encoding matrix} $\mathcal{A}_{\textnormal{CA}}^{t}$
		\State $flag=0$
		\For{$j\leftarrow 1$ to $\li$ }
		\If{$\exists$ a real number $b$ such that $b\vec{a}_{i,j}=\vec{Y}_i$}
		\State $X_j=b$ 
		\State $flag=1$
		\State send $b$ to the $j$-th \textit{output-node}
		\State update the \textit{encoding matrix} $\mathcal{A}_{\textnormal{CA}}$
		\State break
		\Else 
		\State continue    
		\EndIf
		\EndFor
		\If {$flag=0$}
		\If {$\exists$ $c\left(|\mathcal{S}_i|+1\right)$ where $c$ is the \textit{measurement constant} non-zero real numbers as non-zero entries of the updated \textit{recovery vector} $\rvt$ such that the linear equations $\{\vec{Y}_t=\mathcal{A}_{\textnormal{CA}}^{t}\rvt\}_{t\in\{1,2,\ldots,i-1\}}$ hold}
		\State update the \textit{recovery vector} $\rvt$
		\State update the \textit{encoding matrix} $\mathcal{A}_{\textnormal{CA}}$
		\Else
		\State send $\vec{Y_i}$ to the \textit{input-node} corresponding to the sub-circuit in the $i+1$-th stage covering the current one
		\EndIf
		\EndIf
		\EndFor
		\State clearing stage
		\EndProcedure
	\end{algorithmic}
\end{algorithm}

\begin{algorithm}[H]
		\footnotesize
	\caption{Decoding Algorithm (SA)}
	\label{Algorithm:sa}
	\begin{algorithmic}[1]
		\Procedure{Dec}{$\ovt,\mathcal{A}_{\textnormal{SA}}$}
		\For{$i \leftarrow 1,\lo $}
		\State load $c'$ rows of \textit{encoding matrix} $\mathcal{A}_{\textnormal{SA}}^{i}$ and $\vec{Y}_i$ at the $i$-th \textit{input-node}
		\For{$j\leftarrow 1$ to $\li$ }
		\If{$\exists$ a real number $b$ such that $b\vec{a}_{i,j}=\vec{Y}_i$}
		\State $X_j=b$ 
		\State send $b$ to the $j$-th \textit{output-node}
		\State update the \textit{encoding matrix} $\mathcal{A}_{\textnormal{SA}}$
		\State break
		\Else 
		\State continue
		\EndIf
		\EndFor
		\EndFor
		\State clearing stage
		\EndProcedure
	\end{algorithmic}
\end{algorithm}

\section {Proofs of Upper Bounds}
\label{appendix:up}

The outer bound is achieved by performing measurements according to a specially designed $m\times\li$ complex matrix $\mathcal{A}_{m,\li}$, and then The perform decoding in a stage-by-stage manner. First, we state a lemma describing some geometric properties that follow from our definitions of models and descriptions of algorithms.

\begin{figure*}
	\centering
	\includegraphics[scale=0.48]{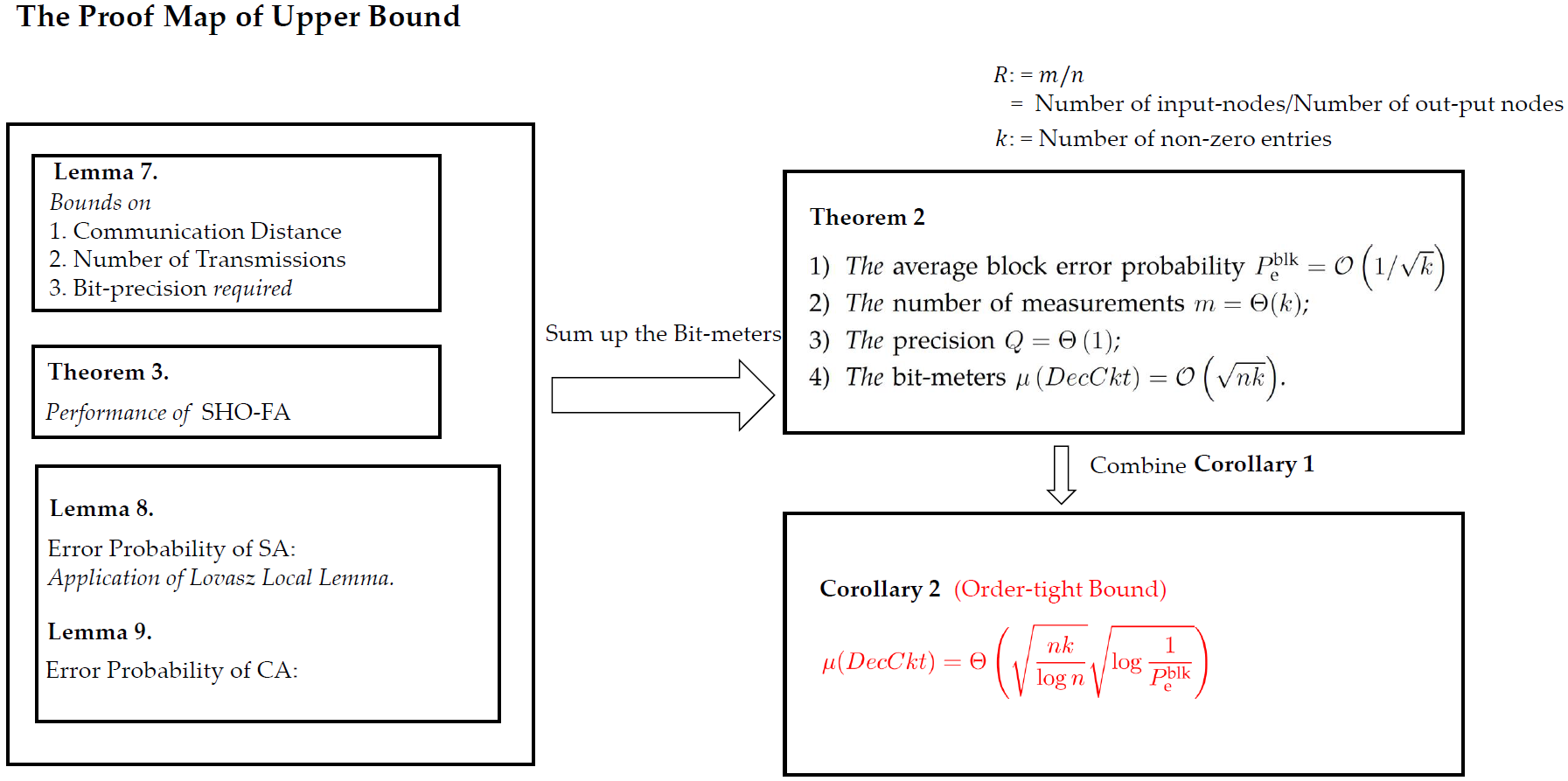}
	\caption{The Proof Map of Upper Bound.}
	\label{fig:pm_up}
\end{figure*}
\begin{lemma}[Properties of $DecCkt$]
	\label{lemma:prop}
	A {\em decoding circuit} $DecCkt$ implementing the decoding steps defined by {\em CA} and {\em SA}, it satisfies the following properties (here $i$ denotes the {\em index of stages}): 
	
	\begin{itemize}
		\item The {\textnormal{Communication Distance}} $\mathcal{D}_{DecCkt}$ is bounded from above by
		
		\begin{align*}
		\mathcal{D}_{DecCkt}^{\text{(i)}}  = 
		\begin{cases}
		\mathcal{O}(\sqrt{\phi^{i-1}\li/k}) \\
		\text{   for $i=1, 2, \ldots, \log_\phi(k/\log_2{k})$}\\
		\mathcal{O}(\sqrt{\li})  \\
		\text{   for $i=\log_\phi(k/\log_2{k})+1$};
		\end{cases}
		\end{align*}
		\item The {\textnormal{Number of Transmissions}} $\mathcal{N}_{DecCkt}$ is bounded by
		\begin{align*}
		\mathcal{N}_{DecCkt}^{\text{(i)}} = 
		\begin{cases}
		\Theta(k/\phi^{i-1}) \\
		\text{ for  $i=1, 2, \ldots, \log_\phi(k/\log_2{k})$}\\
		\Theta(\sqrt{k})  \\
		\text{ for $i=\log_\phi(k/\log_2{k})+1$};
		\end{cases}
		\end{align*}
		

		\item The {\textnormal{Bit-precision}} required in each communication between nodes is bounded by
		\begin{align*}
		\mathcal{B}_{DecCkt} = \Theta(1).
		\end{align*}
	\end{itemize}
\end{lemma}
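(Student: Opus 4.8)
The plan is to verify the three bullets---communication distance, number of transmissions, and per-transmission bit-precision---separately for the combining stages $i\le\log_\phi(k/\log_2 k)$ and for the clearing stage, relying throughout on a single geometric primitive: on a $\rho$--lattice the number of nodes inside a planar region is proportional to its area (up to constants fixed by $\rho$), so a region of area $A$ hosts $\Theta(A)$ nodes and has Euclidean diameter $\mathcal{O}(\sqrt{A})$.

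First I would pin down the area of a stage-$i$ sub-circuit. By the construction in Sections~\ref{sec:fs} and~\ref{sec:sc}, the first stage partitions the circuit into $C_{\mathrm{CA}}k$ groups, so each group occupies area $\Theta(\li/k)$; each subsequent stage merges $\phi$ groups, multiplying the area by $\phi$, so a stage-$i$ group occupies area $\Theta(\phi^{i-1}\li/k)$. Feeding this into the diameter bound immediately gives $\mathcal{D}_{DecCkt}^{(i)}=\mathcal{O}(\sqrt{\phi^{i-1}\li/k})$ for the combining stages, since there every transmission is confined to a single group. For the clearing stage the decoder communicates globally across a circuit of area $\Theta(\li)$, so its distance bound is $\mathcal{O}(\sqrt{\li})$.

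Next, for the number of transmissions I would count groups. Stage $i$ contains $\Theta(k/\phi^{i-1})$ groups, and by the decoding steps each group performs only a constant number of link transmissions (identification, verification, and sending the resolved coefficient to its output-node), so $\mathcal{N}_{DecCkt}^{(i)}=\Theta(k/\phi^{i-1})$. For the clearing stage I would use the residual count: after $\log_\phi(k/\log_2 k)$ stages the number of groups, hence of unresolved non-zeros, has fallen to $\Theta(\log k)$, while the stage is driven by $\Theta(\sqrt{k})$ fresh measurements, giving $\mathcal{N}_{DecCkt}=\Theta(\sqrt{k})$. The bit-precision bound $\mathcal{B}_{DecCkt}=\Theta(1)$ then follows because each message carries a single coefficient quantized to precision $\precision=\Theta(1)$, and no addressing bits are charged since the links enter the bit-meters measure of Definition~\ref{def:bitmeter} as point-to-point pairs.

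I expect the delicate step to be the clearing-stage transmission count. Establishing $\mathcal{N}_{DecCkt}=\Theta(\sqrt{k})$ requires tracking the residual non-zeros through all combining stages---this is precisely where the per-stage success fraction $\rho$ and the choice $\phi=\lceil 1/(1-\rho)\rceil$ enter---and tying that residual to the $\Theta(\sqrt{k})$ measurements allotted to the clearing stage, which leans on the probabilistic resolution analysis inherited from~\cite{bakshi2012sho}. A secondary subtlety is that for the Shotgun Algorithm the groups are assembled from randomly chosen sub-circuits, so I would need to confirm that their physical footprints still have area $\Theta(\phi^{i-1}\li/k)$---the randomness affecting only which input-vector entries share a group, not the geometric layout---so that the same diameter bound applies with $\varphi$ in place of $\phi$.
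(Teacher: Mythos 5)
Your proposal is correct and takes essentially the same route the paper intends: the paper states this lemma with no formal proof at all (it is introduced only as ``geometric properties that follow from our definitions of models and descriptions of algorithms''), and your argument---lattice density giving area $\Theta(\phi^{i-1}\li/k)$ per stage-$i$ group and hence diameter $\mathcal{O}(\sqrt{\phi^{i-1}\li/k})$, a $\Theta(1)$ transmission count per group times $\Theta(k/\phi^{i-1})$ groups, the clearing stage driven by $\Theta(\sqrt{k})$ global measurements, and fixed precision $\precision=\Theta(1)$ per message---is exactly the construction-based reasoning from Sections~\ref{sec:fs}--\ref{sec:sc} and the clearing-stage description that the paper implicitly invokes, and that its proof of Theorem~\ref{thm:bitmeter} then consumes term by term. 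Your handling of the Shotgun Algorithm footprint likewise matches the paper's own (unproved) assertion that the stage-$i$ area spanned by each group remains of order $\varphi^{i-1}\li/C_{\textnormal{SA}}k$ despite the random combination.
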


For the clearing stage, we use SHO-FA~\citep{bakshi2012sho} with an appropriate parameter setting. The following theorem states the performance guarantees of SHO-FA. 

\begin{theorem}[SHO-FA~\citep{bakshi2012sho}]
	\label{thm:shofa}
	For the {\textnormal{Sparsity Model $(n,m,k)$}}, the SHO-FA decoding algorithm with {\em encoding matrix} $\mathcal{A}_{\textnormal{SHO-FA}}$ has the following properties:
	\begin{enumerate}
		\item For every {\textnormal{input vector}} $\ivt\in\real^{\li}$, with probability 1-$\mathcal{O}(1/\sqrt{k})$ over the choice of $\mathcal{A}_{\textnormal{SHO-FA}}$, the algorithm produces a {\textnormal{recovery vector}} $\rvt$ such that $||\ivt-\rvt||_1/||\ivt||_1 \leq 2^{-\precision}$.
		\item The {\em number of measurements} $m\leq 2ck+\sqrt{k}$, where $c$ is the {\em measurements constant}.
	\end{enumerate}
\end{theorem}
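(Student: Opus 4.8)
The plan is to follow the analysis of SHO-FA in~\citep{bakshi2012sho}. The encoding matrix $\mm_{\textnormal{SHO-FA}}$ is the complex-valued adjacency structure of a sparse random bipartite graph: each of the $\li$ signal coordinates is connected to $d=\Theta(1)$ of the $ck$ \emph{bins} chosen (nearly) uniformly at random, and each bin carries two complex measurements, an \emph{identification} measurement and a \emph{verification} measurement. The decoder is a peeling (belief-propagation-style) procedure. I would break the argument into three parts: correctness of the single-coordinate identification/verification primitive; a probabilistic analysis showing the peeling process clears all $k$ non-zeros with probability $1-\mathcal{O}(1/\sqrt{k})$; and a count of the number of measurements.

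First I would isolate the identification/verification lemma. Restricting attention to the subgraph induced by the $k$ non-zero coordinates, call a bin a \emph{singleton} if exactly one unresolved non-zero coordinate is incident to it. For a singleton bin the identification measurement equals $a_{i,j}X_j=e^{\iota\Theta_{i,j}^{I}}X_j$, so that (by the position-encoding choice of the phases $\Theta_{i,j}^{I}$) its argument pins down the unique index $j$ while its magnitude recovers $|X_j|$; quantizing to $\precision$ bits then gives $||\ivt-\rvt||_1/||\ivt||_1\le 2^{-\precision}$ on the resolved coordinates. The verification measurement, whose phases $\Theta_{i,j}^{V}$ are drawn i.i.d.\ from $[0,\pi/2]$, is used to reject \emph{multi-ton} bins (two or more incident non-zeros) that happen to look like singletons: the event that a fixed multi-ton passes the consistency check has probability zero in the continuous model and $\mathcal{O}(2^{-\precision})$ after quantization, so a union bound over the $\mathcal{O}(k)$ bins and peeling steps keeps the total false-decoding probability negligible relative to $1/\sqrt{k}$.

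The heart of the proof, and the step I expect to be the main obstacle, is the probabilistic analysis of the peeling process on the random bipartite graph. Here I would show that, repeatedly resolving a singleton bin's unique neighbor and subtracting its contribution from all incident bins, the process terminates with all $k$ coordinates resolved except with probability $\mathcal{O}(1/\sqrt{k})$. This can be carried out either by a density-evolution / Wormald differential-equation tracking of the residual degree distribution (as for LDPC peeling decoders), or by directly bounding the probability that the induced graph contains a nonempty \emph{stopping set}, a subset of non-zeros all of whose incident bins are multi-tons. For constant left-degree $d$ and bin-oversampling constant $c$ chosen large enough, the dominant contribution to this failure probability comes from small stopping sets and is $\mathcal{O}(1/\sqrt{k})$; the residual $\sqrt{k}$ dense measurements then resolve any leftover coordinates deterministically, which is what fixes the final error probability at $\mathcal{O}(1/\sqrt{k})$.

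Finally, I would count measurements: each of the $ck$ bins contributes two complex measurements (identification and verification) and the clean-up stage contributes $\sqrt{k}$, giving $\lo\le 2ck+\sqrt{k}$. Combining the negligible false-decoding probability from the primitive lemma with the $\mathcal{O}(1/\sqrt{k})$ peeling-failure probability yields the claimed success probability $1-\mathcal{O}(1/\sqrt{k})$ together with the stated relative $\ell_1$ reconstruction guarantee, completing the proof.
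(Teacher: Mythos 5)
Theorem~\ref{thm:shofa} is not proved in this paper at all: it is imported by citation from~\citep{bakshi2012sho} (both the theorem header and the proof of Lemma~\ref{lemma:sa}, which says ``by Theorem~\ref{thm:shofa} proved in~\citep{bakshi2012sho}'', defer to that reference), so there is no in-paper proof to compare your attempt against. Your sketch is a faithful reconstruction of the cited SHO-FA argument --- sparse constant-left-degree bipartite graph, paired identification/verification measurements (matching the phases $\Theta^{I}_{i,j}$, $\Theta^{V}_{i,j}$ that the paper itself describes in Section~\ref{mc} and Figure~\ref{fig:st1}), peeling decoder, and a stopping-set/density-evolution bound on the failure probability --- so in the only meaningful sense it takes the same route as the source this paper relies on. Do note, however, that the two quantitative claims on which the theorem actually rests, namely the $\mathcal{O}(1/\sqrt{k})$ bound on the peeling-failure probability and the uniform control of verification false-passes across all $\mathcal{O}(k)$ adaptive peeling steps, are invoked in your sketch rather than derived; these are exactly what the citation to~\citep{bakshi2012sho} is doing the work for, so your write-up has the same logical status as the paper's own treatment --- correct modulo the external reference --- rather than constituting an independent proof.
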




\begin{lemma}[Error Probability: SA]
	\label{lemma:sa}
	For the {\textnormal{Sparsity Model $(n,m,k)$}}, the decoding circuit $DecCkt$ for {\em SA} implemented on the {\textnormal{Implementation Model $(\rho,\mu)$}} satisfies the following properties:
	\begin{enumerate}
		\item \label{sa:1} There is a constant $\varphi>e$ such that if from $i$-th stage to $(i+1)$-th stage, the area of sub-circuit increases from $\det(\Lambda_i)$ to $\det(\Lambda_{i+1})=\varphi\det(\Lambda_i)$, then for each sub-circuit in the $(i+1)$-th stage, it contains $ \varphi^{i}\li/C_{\textnormal{SA}}k$ {\em output-nodes} where $C_{\textnormal{SA}}>0$ is a constant. 
		
		\item In the $(i+1)$-th stage, for any sub-circuit, denote $\{A_{j}\}_{j\in\mathcal{S}=\{1,2,\ldots, \varphi^{i}\li/C_{\textnormal{SA}}k\}}$ the set of events that the $j$-th {\em output-node} corresponds to a non-zero entry, we have  \[\Pr\left[\bigvee_{j\in\mathcal{S}}{\left(\bigwedge_{i\in\mathcal{S}\backslash\{j\}}{\urcorner{A_i}}{\bigwedge{A_j}}\right)}\right]\geq1-\frac{1}{\varphi}\] where $\mathcal{S}=\{1,2,\ldots, \varphi^{i}\li/C_{\textnormal{SA}}k\}$.
		\item An {\em average block error probability} $\ep=\mathcal{O}(1/\sqrt{k})$ is achievable with a fixed {\em precision} $\precision$ under the regime $m=\Theta(k)$ and the sub-linear regime $k= n^{1-\beta}$ where $\beta\in (0,1)$.
	\end{enumerate}
\end{lemma}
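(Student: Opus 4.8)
The plan is to prove the three assertions in order, since the first supplies the combinatorial bookkeeping needed for the second, and the second drives the probabilistic reduction behind the third. For part~\ref{sa:1}, I would argue purely by tracking how the random partition evolves. The \textnormal{output-nodes} are placed so that each fundamental parallelepiped of area $\det(\Lambda_1)$ carries $\li/C_{\textnormal{SA}}k$ of them (one per entry of $\ivt$, at constant density over the substrate); since the $(i{+}1)$-th stage is formed from the $i$-th by merging $\varphi$ sub-circuits so that $\det(\Lambda_{i+1})=\varphi\det(\Lambda_i)$, a one-line induction yields $\varphi^{i}\li/C_{\textnormal{SA}}k$ \textnormal{output-nodes} per sub-circuit at stage $i+1$. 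The requirement $\varphi>e$ I would defer and justify at the end, as it is precisely what makes the per-stage isolation probability of part~2 dominate the geometric decay rate the chaining in part~3 needs.

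For the second claim I would Poissonize. Reading $A_j$ as the event that the $j$-th \textnormal{output-node} carries an \emph{unresolved} non-zero, write $N=\varphi^{i}\li/C_{\textnormal{SA}}k$ for the number of \textnormal{output-nodes} in a stage-$(i+1)$ sub-circuit. Conditioned on the previous stages, these positions behave, in the \textnormal{Sparsity Model} $(\li,\lo,p)$ with $p=k/\li$, like independent indicators at an effective rate $p'$ with $Np'=\lambda=\Theta(1/C_{\textnormal{SA}})$, because the number of unresolved non-zeros scales like $k/\varphi^{i}$ while the number of stage-$(i+1)$ sub-circuits scales like $C_{\textnormal{SA}}k/\varphi^{i}$. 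The event the algorithm exploits is that a sub-circuit holds exactly one unresolved non-zero, and equivalently that a fixed unresolved non-zero is isolated in its sub-circuit; the latter has probability $(1-p')^{N-1}\to e^{-\lambda}$ as $N\to\infty$. Choosing the constant $C_{\textnormal{SA}}$ large enough forces $\lambda$ small, so $e^{-\lambda}\geq 1-\lambda\geq 1-1/\varphi$, which is the asserted bound; this is exactly the statement that a $(1-1/\varphi)$-fraction of the remaining non-zeros becomes isolated and hence gets resolved at each stage.

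For part~3 I would chain the single-stage estimate. Let $K_i$ be the number of unresolved non-zeros entering stage $i$, with $K_1=\Theta(k)$. Part~2 gives $\mathbb{E}[K_{i+1}\mid K_i]\leq K_i/\varphi$, and either a stage-by-stage Chernoff bound or an Azuma--Hoeffding estimate on the associated martingale keeps $K_i$ geometrically close to its mean; after $T=\log_\varphi(k/\log_2 k)$ stages this leaves $K_{T+1}=\Theta(\log k)$ survivors with probability $1-\mathcal{O}(1/\sqrt{k})$, where $k=\li^{1-\beta}$ converts the exponential-in-$\Theta(\log k)$ tail into the $1/\sqrt{k}$ rate. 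The clearing stage then runs SHO-FA (Theorem~\ref{thm:shofa}) with $\Theta(\sqrt{k})$ fresh measurements on these $\Theta(\log k)$ survivors, recovering them with relative error $\le 2^{-\precision}$ and failure probability $\mathcal{O}(1/\sqrt{k})$; a union bound over the two failure events gives $\ep=\mathcal{O}(1/\sqrt{k})$. Summing $c'$ measurements per group over the $C_{\textnormal{SA}}k/\varphi^{i-1}$ groups of stage $i$ gives $c'C_{\textnormal{SA}}k\sum_{i\ge 0}\varphi^{-i}=\Theta(k)$, and adding the clearing stage's $\Theta(\sqrt{k})$ preserves $\lo=\Theta(k)$.

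The step I expect to be hardest is making the chaining in part~3 rigorous while holding the cumulative failure probability at $\mathcal{O}(1/\sqrt{k})$ across $\Theta(\log k)$ stages. Three issues compound: the random re-groupings make the per-stage success events depend on the entire history, so the ``independent isolation'' heuristic of part~2 must be upgraded to a genuine supermartingale concentration argument; the effective rate $p'$ must be controlled \emph{uniformly} over stages so that $\lambda=\Theta(1)$ at every stage rather than drifting as survivors deplete; and the constants $\sigma$, $\varphi=\lceil 1/(1-\sigma)\rceil$, and $C_{\textnormal{SA}}$ must be fixed consistently, since the isolation fraction $1-1/\varphi$ depends on $C_{\textnormal{SA}}$ while $C_{\textnormal{SA}}$ is in turn constrained through $\varphi$ — closing this loop without circularity requires a short fixed-point/consistency check, which is where I would spend the most care.
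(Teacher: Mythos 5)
Your plan for parts 1 and 3 is essentially the paper's own argument: per-stage constant-fraction resolution, geometric decay of survivors down to $\Theta(\log k)$ after $\log_\varphi(k/\log_2 k)$ stages, concentration across stages (the paper invokes disjointness of the sub-circuits where you propose a supermartingale bound — your version is the more honest way to handle the history dependence you correctly flag), then SHO-FA with $\Theta(\sqrt{k})$ fresh measurements in the clearing stage and a union bound, plus the check that $\lo=\Theta(k)$. The gap is in part 2. You assert that ``a sub-circuit holds exactly one unresolved non-zero'' and ``a fixed unresolved non-zero is isolated in its sub-circuit'' are equivalent; they are not, and the difference is precisely the factor your Poissonization is designed to kill. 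Under your own model (independent indicators at rate $p'$ over $N=\varphi^{i}\li/C_{\textnormal{SA}}k$ nodes, $Np'=\lambda$), the event written in the lemma, $\bigvee_{j}\bigl(\bigwedge_{i\neq j}\urcorner A_i\wedge A_j\bigr)$, has probability $Np'(1-p')^{N-1}\to\lambda e^{-\lambda}\leq e^{-1}$, whereas $(1-p')^{N-1}\to e^{-\lambda}$ is the \emph{conditional} probability given $A_j$. So the literal statement of part 2 can never hold when $\varphi>e$ (since $1-1/\varphi>1-1/e>e^{-1}\geq\sup_{\lambda}\lambda e^{-\lambda}$), and your main lever — taking $C_{\textnormal{SA}}$ large so $\lambda$ is small — drives the literal event's probability toward $0$, not toward $1$. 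As a proof of part 2 as stated, the step fails; what you actually establish is the per-non-zero isolation bound $e^{-\lambda}\geq 1-1/\varphi$, which is the statement your part 3 (and the paper's) actually consumes.

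For comparison, the paper attacks the literal exactly-one event by a different route: a chain-rule decomposition, the Lov\'asz local lemma to upper-bound the conditional probabilities $\Pr\bigl[A_i\mid\bigwedge_{k<i}\urcorner A_k\bigr]\leq C_{\textnormal{SA}}k/(\varphi^{i-1}\li)$, an unproved constant $C'$ relating conditional to unconditional probabilities, and a limit yielding a constant lower bound (stated as $e/C'$, though the computed limit is really $1/(eC')$); it then \emph{defines} $\varphi=C'/(C'-e)$ so that this constant can be rebranded as $1-1/\varphi$. In other words, in the paper $1-1/\varphi$ is not close to one at all, and its compatibility with the requirement $\varphi>e$ in part 1 (which forces $C'<e^{2}/(e-1)$) is never verified — so the paper finesses, rather than resolves, the same tension you ran into. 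Your repair (bound the isolation probability, choose $C_{\textnormal{SA}}$ large, and let the resolved fraction per stage be $1-1/\varphi$) is arguably the cleaner and more defensible reading, and your closing concern about fixing $\sigma$, $\varphi$, and $C_{\textnormal{SA}}$ consistently is well placed; but to be a valid proof you must state explicitly that you are amending the event in part 2 (to ``at most one non-zero in the sub-circuit,'' or to the conditional isolation form) instead of claiming an equivalence that is false.
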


\begin{proof}
	For the first property~\eqref{sa:1}, note that the event $B_j=\bigwedge_{i\in\mathcal{S}\backslash\{j\}}{\urcorner{A_i}}{\bigwedge{A_j}}$ is the event that within the circuit, only the $j$-th \textit{output-node} corresponds to a non-zero entry in the \textit{input-vector} $\ivt$. Furthermore $\Pr{\left[B_i \bigwedge{B_j}\right]}=0$ for all possible $i,j$. Therefore by the chain rule
	\begin{align}
	\IEEEnonumber
	&\Pr\left[\bigvee_{j\in\mathcal{S}}{\left(\bigwedge_{i\in\mathcal{S}\backslash\{j\}}{\urcorner{A_i}}{\bigwedge{A_j}}\right)}\right]\\
	\label{lemma:sa:1}
	&= \sum_{j=1}^{|\mathcal{S}|}\Pr\left[A_j|\bigwedge_{i\in\mathcal{S}\backslash\{j\}}\urcorner{A_i}\right] \left[\prod_{i=1}^{|\mathcal{S}|-1}\left(1-\Pr\left[A_i|\bigwedge_{k<i}{\urcorner{A_k}}\right]\right)\right].
	\end{align}
	
	Next using~\eqref{lemma:sa:1} we find bounds on $\Pr\left[A_j|\bigwedge_{i\in\mathcal{S}\backslash\{j\}}\urcorner{A_i}\right]$ and $\Pr\left[A_i|\bigwedge_{k<i}{\urcorner{A_k}}\right]$. Note that after the $i$-th stage, each event $A_j$ satisfies $\Pr\left[A_j\right]\leq 1/|\mathcal{S}|$ and $A_j$ is mutually independent of all but at most $\varphi^{i-1}\li/C_{\textnormal{SA}}k$ other $A_j$'s and $e\cdot \frac{\varphi^{i-1}\li}{k|\mathcal{S}|}=\frac{e}{\varphi}\leq 1$ by choosing $\varphi\geq e$. Hence, by Lov{\`a}sz local lemma (see for example the textbook~\cite{alon2004probabilistic}), we have $\Pr\left[A_i|\bigwedge_{k<i}{\urcorner{A_k}}\right]\leq \frac{C_{SA}k}{\varphi^{i-1}\li}$. Further, as the sub-lattice in the $i+1$-th stage is chosen uniformly at random, then there is a constant $C'$ such that $\Pr\left[A_j|\bigwedge_{i\in\mathcal{S}\backslash\{j\}}\urcorner{A_i}\right]\geq \frac{\Pr\left[A_j\right]}{C'}=\frac{C_{SA}k}{C'\varphi^{i-1}\li}$. Thus from Equation~\eqref{lemma:sa:1},
	\begin{align}
		\IEEEnonumber
		&\Pr\left[\bigvee_{j\in\mathcal{S}}{\left(\bigwedge_{i\in\mathcal{S}\backslash\{j\}}{\urcorner{A_i}}{\bigwedge{A_j}}\right)}\right]\\
	\IEEEnonumber
	&\geq \frac{C_{\textnormal{SA}}k}{C'\varphi^{i-1}\li}|\mathcal{S}| \cdot \left(1-\frac{C_{\textnormal{SA}}k}{\varphi^{i-1}\li}\right)^{|\mathcal{S}|}\\
	\label{lemma:sa:2}
	&\geq \frac{C_{\textnormal{SA}}k}{C'\varphi^{i-1}\li}\cdot \frac{\varphi^{i-1}\li}{C_{\textnormal{SA}}k} \cdot \left(1-\frac{C_{\textnormal{SA}}k}{\varphi^{i-1}\li}\right)^{\varphi^{i-1}\li/C_{\textnormal{SA}}k}.
	\end{align}
	
	Taking limit with respect to $\li$ and using the Inequality~\eqref{lemma:sa:2} we have \[\Pr\left[\bigvee_{j\in\mathcal{S}}{\left(\bigwedge_{i\in\mathcal{S}\backslash\{j\}}{\urcorner{A_i}}{\bigwedge{A_j}}\right)}\right]\geq \frac{e}{C'}.\]
	
	Note that the above lower bound is constant across all stages. Letting $\varphi={C'}/{\left(C'-e\right)}$ for an appropriate $C'$ and applying concentration inequalities under the {\em Sparsity Model} $\left(\li,\lo,p\right)$, the probability $P_e^{1}(\textnormal{SA})$ of the event that after first $\log_\varphi(k/\log_2{k})$ stages more than $1/\sqrt{k}$ unsolved non-zero entries remain is upper bounded as 
	\begin{align}
	\IEEEnonumber
	&P_e^{1}(\textnormal{SA}) \\
	\IEEEnonumber
	&\leq C''\log_\varphi(k/\log_2{k}) \exp\left(k/{\varphi^{\log_\varphi(k/\log_2{k})}}\right)\\
	\label{lemma:sa:3}
	& \leq C''/\sqrt{k}
	\end{align}
	for some constant $C''>0$ since there is no intersection between sub-circuits at each stage.
	
	For the clearing stage, by Theorem~\ref{thm:shofa} proved in~\citep{bakshi2012sho}, we have if we use $\Theta(\sqrt{k})$ measurements, then the probability $P_e^{2}(\textnormal{SA})$ of the event that $||\ivt-\rvt||_1/||\ivt||_1 \geq 2^{-\precision}$ is $P_e^{2}(\textnormal{SA})=\mathcal{O}(1/\sqrt{k})$. Using union bound, we have $\ep\leq P_e^{1}(\textnormal{SA})+P_e^{2}(\textnormal{SA})$, implying that $\ep=\mathcal{O}(1/\sqrt{k})$.
\end{proof}

\begin{lemma}[Error Probability: CA]
	\label{lemma:ca}
	For the {\textnormal{Sparsity Model $(n,m,k)$}}, the decoding circuit $DecCkt$ for {\em CA} implemented on the {\textnormal{Implementation Model $(\rho,\mu)$}} achieves an average block error probability $\ep=\mathcal{O}(1/\sqrt{k})$ with a fixed precision $\precision$ under the regime $m=\Theta(k)$ and the sub-linear regime $k= n^{1-\beta}$ where $\beta\in (0,1)$.
\end{lemma}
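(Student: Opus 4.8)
The plan is to mirror the structure of the proof of Lemma~\ref{lemma:sa}, splitting the failure event into two parts and controlling each separately. Let $P_e^{1}(\textnormal{CA})$ be the probability that more than $\log_2 k$ non-zero entries remain unresolved after the first $\log_\phi(k/\log_2 k)$ stages, and let $P_e^{2}(\textnormal{CA})$ be the probability that the clearing stage fails to recover the residual signal within relative error $2^{-\precision}$. A union bound gives $\ep \le P_e^{1}(\textnormal{CA}) + P_e^{2}(\textnormal{CA})$, so it suffices to show each term is $\mathcal{O}(1/\sqrt{k})$ in the regime $m=\Theta(k)$, $k = n^{1-\beta}$, at fixed precision $\precision=\Theta(1)$.

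The key structural feature I would exploit is that, unlike SA, the CA sub-circuits at each stage are formed by deterministically merging $\phi=\lceil 1/(1-\rho)\rceil$ adjacent groups, so the stage-$i$ groups form a genuine partition of the $\li$ input coordinates. Working in the Sparsity Model $(\li,\lo,p)$ with $p=k/\li$, each coordinate is non-zero independently with probability $p$. Because two disjoint stage-$i$ groups arise from disjoint towers of nested sub-groups, their resolution outcomes are independent functions of disjoint blocks of the random support; this lets me dispense entirely with the Lov{\`a}sz-local-lemma step used for SA and instead write the number $K_i$ of unresolved non-zeros as a sum of independent per-group contributions at each stage. The decisive accounting is that the choice $\phi=\lceil 1/(1-\rho)\rceil$ keeps the ratio of residual non-zeros to the number of groups equal to $\Theta(1/C_{\textnormal{CA}})$ across all stages, so the per-group probability of carrying exactly one unresolved non-zero (hence peeling it off) stays bounded below by the same constant $\rho$ at every stage.

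The main obstacle is to make the ``peel a $\rho$-fraction per stage'' claim rigorous while tracking the evolution of the residual support. I would condition on the history $\mathcal{H}_i$ through stage $i$ and establish $\mathbb{E}[K_{i+1}\mid \mathcal{H}_i] \le (1-\rho)K_i$ using the within-stage independence above, then apply a bounded-difference (McDiarmid) inequality to concentrate $K_{i+1}$ around its conditional mean; the sensitive point is verifying that the peeling fraction stays uniformly bounded away from zero even as the group sizes grow like $\phi^{i-1}\li/(C_{\textnormal{CA}}k)$ while the residual density shrinks, which is exactly the balance secured by the density computation of the previous paragraph. Telescoping the geometric decay over $\log_\phi(k/\log_2 k)$ stages leaves $K \le \log_2 k$ residual non-zeros except with probability $\mathcal{O}(1/\sqrt{k})$, precisely as in~\eqref{lemma:sa:3}; since this is the peeling analysis of SHO-FA, I would import the concentration estimate from~\citep{bakshi2012sho} rather than re-derive it, giving $P_e^{1}(\textnormal{CA}) = \mathcal{O}(1/\sqrt{k})$.

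Finally, for the clearing stage I would apply Theorem~\ref{thm:shofa} with $\Theta(\sqrt{k})$ fresh measurements to the $\mathcal{O}(\log_2 k)$ residual non-zeros: SHO-FA recovers the residual signal with relative $\ell_1$-error at most $2^{-\precision}$ except with probability $P_e^{2}(\textnormal{CA}) = \mathcal{O}(1/\sqrt{k})$ at fixed precision. The regime hypotheses $m=\Theta(k)$ and $k=n^{1-\beta}$ ensure the total measurement budget $m=c\sum_i \lo_i$ remains $\Theta(k)$ across the $\log_\phi(k/\log_2 k)$ stages plus the clearing stage, so all estimates hold simultaneously; combining $P_e^{1}(\textnormal{CA})$ and $P_e^{2}(\textnormal{CA})$ through the union bound yields $\ep=\mathcal{O}(1/\sqrt{k})$, as claimed.
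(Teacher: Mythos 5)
Your overall skeleton (the union bound $\ep \le P_e^{1}(\textnormal{CA}) + P_e^{2}(\textnormal{CA})$ and the clearing stage handled by Theorem~\ref{thm:shofa}) matches the paper, but the core of your argument for $P_e^{1}(\textnormal{CA})$ --- the conditional-expectation decay $\mathbb{E}[K_{i+1}\mid\mathcal{H}_i]\le(1-\rho)K_i$ driven by single-non-zero peeling --- is false for CA, and it fails precisely because of the deterministic nested merging that you single out as the simplifying feature. In CA, two non-zero entries that fall into the same stage-$1$ group lie in the same group at every later stage; merging $\phi$ adjacent groups can never separate them, so a decoder that only resolves groups containing exactly one unresolved non-zero leaves them untouched forever. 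Conditioning on a history in which every residual non-zero sits in such a collided pair gives $K_{i+1}=K_i$ with probability one, so no constant $\rho>0$ can work. Nor is this a corner case: with $k$ non-zeros spread over $C_{\textnormal{CA}}k$ first-stage groups, a constant fraction of the non-zeros are involved in stage-$1$ collisions with high probability, so pure single-detection peeling would leave $\Theta(k)$ residuals --- far more than the $\mathcal{O}(\log_2 k)$ that your clearing step, with only $\Theta(\sqrt{k})$ fresh measurements, can absorb. This also undercuts your plan to ``import'' the peeling concentration from~\citep{bakshi2012sho}: that analysis lives on a graph that is re-randomized at each round, which is exactly what SA has (fresh uniform re-grouping separates collided entries with constant probability per stage) and exactly what CA lacks. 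Your contrast between the two algorithms is therefore backwards: the random re-grouping is what makes an LLL-free peeling analysis possible, not an obstacle that CA avoids.

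What rescues CA --- and what the paper's proof actually leans on --- is the chain feature of Algorithm~\ref{Algorithm:ca}: each group carries forward the measurements of all its constituent sub-groups, so a stage-$i$ group holds $\Theta(c\,\phi^{i-1})$ accumulated linear equations and can jointly solve several residual non-zeros, not just a single one. Consequently the paper never tracks peeling dynamics at all. It makes a one-shot, static argument at the penultimate stage $i=\log_\phi(k/\log_2 k)-1$: using the i.i.d.\ Sparsity Model $(\li,\lo,p)$ directly, it bounds the probability of the per-sub-circuit event $B_j$ that a given sub-circuit contains at most $\log_\phi(k)$ non-zero entries (such a sub-circuit is solvable because its accumulated equations far outnumber its unknowns, ``may be solved in the former stages''), then exploits the fact that the penultimate-stage sub-circuits are disjoint --- hence these events are independent --- and applies a concentration inequality to conclude that only a polylogarithmic number of entries survive to the clearing stage, giving $P_e^{1}(\textnormal{CA})=\mathcal{O}(1/\sqrt{k})$. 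If you wish to keep your dynamic, stage-by-stage formulation, you must redefine ``resolved'' to include entries recovered by the accumulated linear systems and re-prove the decay claim for that decoder; as written, your key step is unprovable because the inequality it asserts does not hold.
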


\begin{proof}
	Using the same argument in Lemma~\ref{lemma:sa}, it suffices to show the probability $P_e^{1}(\textnormal{CA})$ of the event that more than $1/\sqrt{k}$ non-zero entries being left undecoded after first $\log_\phi(k/\log_2{k})$ stages satisfies $P_e^{1}(\textnormal{CA})=\mathcal{O}(1/\sqrt{k})$.
	
	Hence the only thing we need to show is for some choices of the \textit{encoding matrix} $\mathcal{A}_{\textnormal{CA}}$, the probability for at most $1/\sqrt{k}$ unsolved entries at the $\left(\log_\phi(k/\log_2{k})-1\right)$-th stage before the clearing stage is $P_e^{1}(\textnormal{CA})=\mathcal{O}(1/\sqrt{k})$. Therefore if at the $\left(\log_\phi(k/\log_2{k})-1\right)$-th stage, there is a constant fraction $\rho\geq 1-{1}/{\phi}$ of sub-circuits which contain at most $\log_\phi(k)$ non-zero entries (may be solved in the former stages), we then could claim that $P_e^{1}(\textnormal{CA})=\mathcal{O}(1/\sqrt{k})$ by using concentration inequalities and the fact that the sub-circuits at the $\left(\log_\phi(k/\log_2{k})-1\right)$-th stage have no intersection with one another. This is true because only $\rho\left(\log_\phi(k)\right)^2$ non-zero entries remain undecoded. Let the event that at the $\left(\log_\phi(k/\log_2{k})-1\right)$-th stage the $j$-th sub-circuit is of at most output-nodes corresponds to $\log_\phi(k)$ non-zero entries be $B_j$. Note that by the definition of our \textit{Sparsity Model ($\li$,$m$,$p$)} in Definition~\ref{pcs} the probability for each sub-circuit has at most $\log_\phi(k)$ non-zero entries is bounded from above by
	\begin{align*}
	\Pr\left[B_j\right] \leq \left(1-\frac{C_{\textnormal{CA}}\phi^{i}k}{\li}\right)^{\li/C_{\textnormal{CA}}\phi^{i}k}
	\end{align*}
	where $i=\log_\phi(k/\log_2{k})-1$.
	
	Taking limit with respect to $\li$, we have $\lim_{\li\rightarrow \infty}\Pr\left[\bar{B_j}\right] > 1-1/{e}$. Therefore by letting $\phi\leq e$ we have $P_e^{1}(\textnormal{CA})=\mathcal{O}(1/\sqrt{k})$ and hence using the same argument in Lemma~\ref{lemma:sa}, we conclude this lemma.
	
\end{proof}
Now we prove Theorem~\ref{thm:bitmeter} and Corollary~\ref{coro:tight} stated in Section~\ref{Main Results}.

\vspace{10pt}

\subsection{\em Proof of Theorem~\ref{thm:bitmeter} and Corollary~\ref{coro:tight}}



	From Lemma~\ref{lemma:prop}, the number of transmissions $\mathcal{N}^{\text{(i)}}_{DecCkt}$ decays geometrically. Combining this with Lemma~\ref{lemma:prop}, we conclude that the total {\textit{bit-meters}} are bounded by:
	\begin{align}
	\IEEEnonumber
		&\mu{(DecCkt)} \\
	\IEEEnonumber
	&= \sum_{i=1}^{\log_\phi(k/\log_2{k})+1}{\mathcal{D}_{DecCkt}^{\text{(i)}}\mathcal{N}_{DecCkt}^{\text{(i)}}\mathcal{B}_{DecCkt}}\\
	\label{noprecision}
	&=  \sum_{i=1}^{\log_\phi(k/\log_2{k})}{\mathcal{O}\left(\sqrt{\li k/\phi^{i-1}}\right)}+\mathcal{O}\left( \sqrt{nk}\right)\\
	\label{equation2_thm4}
	&=  \mathcal{O}\left( \sqrt{nk}\right)\\
	\label{equation3_thm4}
	&=  \mathcal{O}\left( \sqrt{\frac{nk}{\log{n}}}\sqrt{\log{\frac{1}{\ep}}}\right).
	\end{align}
	
	We get (\ref{noprecision}) because of the assumption that the precision parameter $\precision$ is fixed. Summing up all terms in (\ref{noprecision}) yields equation~(\ref{equation2_thm4}). By Theorem~\ref{thm:shofa}, Lemma~\ref{lemma:ca} and Lemma~\ref{lemma:sa}, the average block error probability $\ep$ satisfies $\ep=\mathcal{O}(1/\sqrt{k})$. Since $\mathcal{O}(\sqrt{\log\li})=\mathcal{O}(\sqrt{\log k})$ in the sub-linear regime $k= n^{1-\beta}$ where $\beta\in (0,1)$, we have $\mathcal{O}(\sqrt{\log\li})=\mathcal{O}(\sqrt{1/\log{\ep}})$ implying (\ref{equation3_thm4}). Therefore combining the above with Corollary~\ref{corollary}, we get $\mu{(DecCkt)} = \Theta\left( \sqrt{\frac{nk}{\log{n}}}\sqrt{\log{\frac{1}{\ep}}}\right)$.
\qed
\bibliographystyle{IEEEtran}
\bibliography{ref}

\end{document}